\documentclass[prx,longbibliography,twocolumn,showpacs,nofootinbib,superscriptaddress,notitlepage]{revtex4-1}
\usepackage{amsmath}
\usepackage{amssymb,bm}
\usepackage{amsthm}
\usepackage{mathtools}
\usepackage{url}

\newtheorem{innercustomgeneric}{\customgenericname}
\providecommand{\customgenericname}{}
\newcommand{\newcustomtheorem}[2]{%
  \newenvironment{#1}[1]
  {%
   \renewcommand\customgenericname{#2}%
   \renewcommand\theinnercustomgeneric{##1}%
   \innercustomgeneric
  }
  {\endinnercustomgeneric}
}

\newcustomtheorem{customthm}{Theorem}
\newcustomtheorem{definitionBob}{Definition}

\usepackage{color,dsfont}
\usepackage{graphicx}
\usepackage{ragged2e}
\usepackage[colorlinks=true, hyperindex, breaklinks, linkcolor=blue, urlcolor=blue, citecolor=blue]{hyperref} 
\usepackage[normalem]{ulem}
\usepackage[capitalise]{cleveref}
\usepackage{mathrsfs}
\usepackage[caption=false]{subfig}
\usepackage{mathtools}
\usepackage{float}
\usepackage{verbatim}
\usepackage{latexsym}
\usepackage{amsmath}
\usepackage{algorithm}
\usepackage{algorithmicx}
\usepackage{algpseudocode}
\usepackage{amssymb}
\usepackage{setspace}
\usepackage{amsfonts}
\usepackage{stmaryrd}
\usepackage{xcolor}
\usepackage{enumitem}
\usepackage{lipsum}

\newtheorem{corollary}{Corollary} 
 
\newtheorem{proposition}{Proposition}

\begin{document}

\title{Adaptive Deformation of Color Code in Square Lattices with Defects}

\author{Tian-Hao Wei}
\affiliation{Laboratory of Quantum Information, University of Science and Technology of China, Hefei, 230026, China}

\author{Jia-Xuan Zhang}
\email{zhangjx1@iai.ustc.edu.cn}
\affiliation{Institute of Artificial Intelligence, Hefei Comprehensive National Science Center, Hefei, Anhui, 230088, China}

\author{Jia-Ning Li}
\affiliation{Laboratory of Quantum Information, University of Science and Technology of China, Hefei, 230026, China}

\author{Wei-Cheng Kong}
\affiliation{Origin Quantum Computing Technology (Hefei) Co., Ltd., Hefei, Anhui, 230088, China}

\author{Yu-Chun Wu}
\email{wuyuchun@ustc.edu.cn}
\affiliation{Laboratory of Quantum Information, University of Science and Technology of China, Hefei, 230026, China}
\affiliation{Anhui Province Key Laboratory of Quantum Network, University of Science and Technology of China, Hefei, 230026, China}
\affiliation{Institute of Artificial Intelligence, Hefei Comprehensive National Science Center, Hefei, Anhui, 230088, China}

\author{Guo-Ping Guo}
\affiliation{Laboratory of Quantum Information, University of Science and Technology of China, Hefei, 230026, China}
\affiliation{Anhui Province Key Laboratory of Quantum Network, University of Science and Technology of China, Hefei, 230026, China}
\affiliation{Institute of Artificial Intelligence, Hefei Comprehensive National Science Center, Hefei, Anhui, 230088, China}
\affiliation{Origin Quantum Computing Technology (Hefei) Co., Ltd., Hefei, Anhui, 230088, China}

\begin{abstract}
	
	Quantum error correction is a crucial technology for fault tolerant quantum computing. On superconducting platforms, hardware defects in large scale quantum processors can disrupt the regular lattice structure of topological codes and impair their error correction capabilities. Although defect adaptive methods for surface codes have been extensively studied, other topological codes such as color codes still lack a systematic framework for handling defects. To address this issue, we propose a universal superstabilizer scheme applicable to data qubit defects in arbitrary stabilizer codes. Based on this scheme, we develop concrete repair methods for isolated defects of both internal data qubits and ancilla qubits in color codes defined on square lattices. Furthermore, for ancilla qubit defects, we present two optimization schemes. One scheme reuses neighboring ancilla qubits, and the other employs iSWAP gates. Unlike conventional approaches that directly disable neighboring data qubits and thus cause resource waste, both of our schemes avoid such waste and consequently achieve a lower logical error rate.Integrating the above techniques, we construct a comprehensive defect adaptive architecture for color codes to handle various defect clusters. We also show that our scheme supports a full transversal Clifford gate set and lattice surgery operations. These results provide a systematic theoretical pathway for deploying robust and low overhead color codes on defective quantum hardware.
	
\end{abstract}
\maketitle
%
%
\section{Introduction}
\label{sec:Intro}
Quantum computing, with its parallel processing capability that transcends classical computing paradigms, demonstrates disruptive application potential in fields such as cryptanalysis, quantum simulation, material design, and optimization algorithms \cite{shor1999polynomial,georgescu2014quantum,guo2024harnessing,grover1996fast}. However, physical qubits possess inherent fragility, rendering them highly susceptible to environmental noise, control errors, and relaxation effects \cite{krantz2019quantum,terhal2015quantum}. Quantum coherence is difficult to maintain over extended periods, severely constraining the reliable execution of large-scale quantum algorithms. Quantum error correction (QEC) serves as the key technology to overcome this bottleneck \cite{gaitan2008quantum,campbell2017roads}. By constructing logical qubits through redundant encoding, QEC effectively suppresses the accumulation of physical noise, forming the foundational support for achieving fault-tolerant quantum computing and advancing quantum systems from principle validation to practical implementation.

Practical quantum algorithms typically require thousands of logical qubits with logical error rates below $10^{-10}$, corresponding to a hardware scale on the order of millions of physical qubits \cite{gidney2021factor}. During the large-scale integration of quantum chips, issues such as fabrication imperfections, frequency mismatches \cite{kjaergaard2020superconducting,bilmes2020resolving}, and interference from two-level systems \cite{klimov2018fluctuations,burnett2014evidence} are unavoidable. Existing studies indicate that approximately 1\%–2\% of qubits in current superconducting quantum chip fabrication processes exhibit defects \cite{lisenfeld2019electric,smith2022scaling}. Such hardware defects can lead to qubit failure, degraded entangling gate fidelity, or even complete device failure. Topological quantum encoding relies heavily on regular and complete lattice structures to achieve stable encoding of logical qubits, and hardware defects directly disrupt the topological structure, reduce the code distance, and cause a sharp increase in the logical error rate. Therefore, enabling quantum error-correcting codes to efficiently adapt to non-ideal chip defects and operate stably on noisy hardware has become a critical challenge that must be addressed in the scalable development of superconducting quantum systems.

Among various candidate quantum codes, the surface code enjoys a mainstream status due to its high error threshold and scalability \cite{fowler2012surface}, making it the first to face the challenge of hardware defects and prompting extensive subsequent research on defect adaptation that has yielded a wealth of accumulated experience and strategies. Early studies on defects were primarily based on the superstabilizer approach, employing a data-qubit deactivation strategy that deactivates qubits in the vicinity of defective regions to convert stabilizer measurements into gauge measurements, thereby forming a pair of X/Z superstabilizers covering the entire defective region and reconstructing fault-tolerant stabilizers and logical operators \cite{auger2017fault,strikis2023quantum,siegel2023adaptive,lin2024codesign,yin2024surf}. Subsequent work further optimized the construction of superstabilizers by replacing the coarse approach of using a single pair of superstabilizers to cover an entire defect cluster with a more refined method that stabilizes the defect cluster through multiple so-called “bandage-like” superstabilizers \cite{wei2025low}, significantly suppressing the reduction in code distance and achieving a lower logical error rate. When dealing with isolated data-qubit defects, the superstabilizer scheme typically incurs only a reduction of 1 in both the X and Z code distances (i.e., a reduction of 1 in the weight of error chains passing through the defective qubit) and a doubling of time overhead (since X and Z gauge operators are generally non-commuting and must be measured separately). However, for ancilla qubit defects, the existing superstabilizer approach can only handle them by disabling all data qubits surrounding the defective ancilla, which often results in a reduction of 2 in both the X and Z code distances.

To address ancilla qubit defects, researchers have successively proposed the Snakes and Ladders scheme \cite{leroux2025snakes} and the LUCI scheme \cite{debroy2025luci} to mitigate the resource overhead caused by the traditionalapproach of directly disabling the data qubits neighboring the defective ancilla. The former measures the stabilizer associated with a defective ancilla qubit by reusing neighboring ancilla qubits, maintaining the code distance in one direction while incurring a reduction of 2 in the other direction, with a time overhead comparable to that of the superstabilizer scheme. The LUCI scheme is based on the concept of mid-cycle surface codes and leverages stabilizer measurements from two consecutive cycles to preserve the original code distance in the presence of ancilla qubit defects, albeit at the cost of doubling the time overhead compared to the superstabilizer scheme. This scheme offers high flexibility, tolerating up to 50\% ancilla qubit defects \cite{debroy2025diamond}, and is currently the only approach capable of handling defects in hex-grid surface codes \cite{higgott2025handling}. In addition, some studies have explored the use of iSWAP gates developed on recent superconducting hardware platforms. By employing iSWAP gates, neighboring ancilla qubits can be rotated around a defective ancilla qubit to extract the syndrome of the defect, thereby preserving the code distance while maintaining a time overhead comparable to that of the superstabilizer scheme \cite{zhou2024halma}, achieving the lowest logical error rate for ancilla qubit defects to date.

Compared with the surface code, the color code \cite{fowler2011two} exhibits superior asymptotic scaling in the number of physical qubits required per logical qubit, i.e., higher encoding rate, and under specific conditions, it has the potential to achieve a lower logical error rate for a given physical error probability and code distance \cite{moussa2016transversal}. More importantly, the color code natively supports transversal fault-tolerant implementation of the entire Clifford gate set \cite{moussa2016transversal,kubica2015universal}, a property that directly eliminates the substantial overhead associated with magic state distillation or complex gate teleportation protocols required for non-transversal gates in the surface code, thereby significantly reducing the resource consumption and circuit complexity for fault-tolerant logical gate execution \cite{lee2025low,beverland2021cost,lacroix2025scaling,sales2025experimental}. These structural advantages position the color code as a promising candidate for medium-to-large-scale fault-tolerant quantum computing architectures. However, in contrast to the relatively mature defect adaptation framework of the surface code, other important topological codes such as the color code still lack systematic and comprehensive solutions in this regard.

To this end, this paper conducts a series of studies on defect issues in the 6.6.6 color code with triangular boundaries on a square lattice, which represents the current mainstream quantum chip architecture \cite{gao2025establishing,google2025quantum}. First, we propose superstabilizer schemes for isolated data-qubit defects and ancilla qubit defects, respectively. Compared with the baseline defect-free color code circuit, these schemes incur no additional time overhead while introducing a code distance reduction of 1 for data-qubit defects and 2 for ancilla qubit defects. To address the resource overhead introduced by ancilla qubit defects, we further propose two optimization schemes: the Neighbor-Assisted scheme based on neighboring ancilla reuse, and the iSWAP-Mediated scheme based on iSWAP gates. The former, at the cost of twice the time resources, suppresses the code distance reduction from 4 to 2 under the considered defect cluster distribution; the latter completely eliminates the code distance reduction caused by isolated ancilla qubit defects at the cost of 1.5 times the time resources. Both schemes effectively improve the code distance and qubit utilization, significantly suppressing the accumulation of logical errors. In addition, this paper presents and proves a universal superstabilizer scheme for data-qubit defects in arbitrary stabilizer codes. Based on this scheme, we propose an adaptive adapter architecture that achieves efficient deployment and stable operation of color codes on non-ideal lattices through topological deformation of defective lattices and precise identification of superstabilizers. This provides critical support for the scalable development of quantum systems. Finally, we discuss and provide solutions for logical operations in the presence of defects, including transversal Clifford gates and lattice surgery, thereby establishing a reliable technological pathway toward large-scale universal fault-tolerant quantum computing.

The structure of this paper is as follows. First, in Section II, we introduce the fundamentals of color codes and the benchmark color code circuit adopted in this work. Next, in Section III, we discuss the handling of isolated data-qubit defects, ancilla qubit defects, and coupler defects within the color code, and present the corresponding numerical simulation results. In Section IV, we provide a proof of the universal superstabilizer scheme, and based on this, we elucidate the boundary and corner characteristics of color code defects as well as the complete procedure of the defect adapter, together with numerical simulation results for defect clusters. Then, in Section V, we discuss logical operations on defective color-code lattices. Finally, in Section VI, we conclude the paper and discuss potential directions for future research.

\section{Backround}
\label{sec:Back}
To detect and correct the impact of physical errors on logical information, quantum error-correcting codes encode logical qubits into multiple physical qubits and repeatedly measure stabilizers to extract syndromes, thereby inferring error types. This section focuses on the definition of the 6.6.6 triangular color code in the defect-free case and its mapping to hardware.

The color code is an important class of topological quantum error-correcting codes, introduced by H. Bombin and M.A. Martin-Delgado in 2006 \cite{bombin2006topological}. This work focuses on an important subclass of color codes, defined on a hexagonal lattice, which is often referred to in the literature as the 6.6.6 triangular color code. This color code is constructed on a trivalent, three-face-colorable graph, where each vertex has degree three and all faces can be properly colored with three colors such that adjacent faces receive different colors \cite{kesselring2024anyon}. The code considered in this work adopts a triangular boundary (as shown in the Fig.~\ref{fig:colorcode}(a)), which results in a triangular color code encoding exactly one logical qubit, thereby facilitating fault-tolerant quantum computation. On such a lattice, each physical qubit is placed at a vertex, and each face (including both triangular and hexagonal faces) is associated with two stabilizer operators: an $X$-type stabilizer and a $Z$-type stabilizer, defined respectively as the tensor product of Pauli $X$ operators and Pauli $Z$ operators acting on all data qubits belonging to that face. All face stabilizers commute with one another and collectively generate the stabilizer group that defines the logical subspace. For a finite lattice containing $n$ physical qubits, the number of logical qubits is given by $k = n - |\mathcal{S}|$, where $|\mathcal{S}|$ is the number of independent stabilizer generators, and the error-correcting capability of the code is determined by the shortest weight of logical operators \cite{zhang2024facilitating} (i.e., the code distance). The measurement of stabilizers is central to the error correction process: periodic measurements yield syndrome information ($+1$ or $-1$), where a $-1$ outcome signals the possible occurrence of errors in the corresponding region, and a decoder then infers the most likely physical error chain.

Mapping the abstract color-code lattice to hardware requires careful consideration of the underlying chip’s connectivity \cite{gidney2023new}. This study is based on a common square-lattice superconducting quantum chip architecture \cite{zhao2022realization,wu2021strong,zhao2022realization,google2023suppressing,arute2019quantum,google2025quantum}, in which each qubit is typically directly coupled only to its four nearest neighbors. To realize the high-weight stabilizer measurements required by the hexagonal faces under such limited connectivity, ancilla qubits and specific circuit structures are necessary. The color-code circuit used in this work is shown in Fig.~\ref{fig:colorcode}(b). In this circuit, each region contains two ancilla qubits, which are first prepared into a Bell pair and then used to sequentially extract the $X$ and $Z$ syndromes. During extraction, the two ancilla qubits simultaneously act as flag qubits, helping to detect errors that may propagate during the measurement. All schemes proposed in this paper are modifications based on this circuit, which serves as the baseline circuit for subsequent discussions.

\begin{figure}[!tbp]
	\centering
	\includegraphics[width=\linewidth]{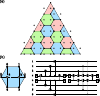}
	\caption{\textbf{Schematic diagram of the 6.6.6 triangular color code:} (a)Color code lattice diagram. Each qubit is coupled only to its four nearest neighbors (up, down, left, right). Solid markers represent data qubits, while hollow markers represent ancilla qubits. (b)Color code circuit diagram. In this circuit, each region contains two ancilla qubits that form a Bell pair, which is then used to extract the X/Z syndromes in sequence. When extracting the X/Z syndrome, the a/b ancilla qubits function as flag qubits. This circuit was proposed by \cite{baireuther2019neural}.}
	\label{fig:colorcode}
\end{figure}

\section{Strategies for Handling Isolated Defects}
\label{sec:Stratey}
This section introduces strategies assuming only an isolated defect exists within the target window. In this section, we focus on defects located in the bulk (i.e., the interior region), but the methods are directly applicable to boundary defects.

\subsection{Data Defect}
\label{subsec:data}
The fundamental starting point of this study lies in canceling the weight on the defective data qubit through products of stabilizers. Consider the local structure surrounding an isolated data qubit defect, where three stabilizer plaquettes are adjacent to the defect.Consider the local structure surrounding an isolated data qubit defect, as shown in Fig.~\ref{fig:data_defect}(a), where three stabilizer plaquettes are adjacent to the defect. Each plaquette contains X, Z, and Y-type stabilizers (where the Y-type is obtained as the product of X and Z). The defective data qubit belongs simultaneously to these three plaquettes; therefore, appropriate products of stabilizers from different plaquettes can be constructed to cancel the weight on the defective qubit. Specifically, all possible combinations of X, Y, and Z stabilizers form a group that can be generated by any two of the three superstabilizers shown in Fig.~\ref{fig:data_defect}(b), where the product of any two superstabilizers determines the third, reflecting the completeness of this generating structure.

Based on the above theoretical framework, the specific operational rules for defect handling are as follows: remove the defective data qubit from its six adjacent stabilizer checks, transforming these six weight-6 stabilizer checks into weight-5 checks. After this modification, adjacent checks of different Pauli types become anti-commuting; consequently, these weight-5 checks no longer yield deterministic measurement outcomes and become gauge checks.

Within this framework, a superstabilizer is defined as the product of multiple gauge checks. The product of any two gauge checks of the same Pauli type commutes with all gauge checks and defect-free stabilizers, yielding deterministic measurement outcomes and forming a superstabilizer of the corresponding Pauli type. In this example, the product of two weight-5 gauge checks of the same Pauli type constitutes a weight-8 superstabilizer.

Through the above operation, the system effectively loses one data qubit and two stabilizers while introducing one logical gauge qubit, whose logical operator can be taken as the product of any two weight-5 gauge checks of different Pauli types. This method effectively “patches” the defective qubit within the code, preserves the number of logical qubits, and introduces no additional time overhead compared to the defect-free color code circuit. However, since the lengths of both X and Z logical chains passing through the defect region are reduced by one, the code distance decreases by one in both directions.

\begin{figure}[!tbp]
	\centering
	\includegraphics[width=\linewidth]{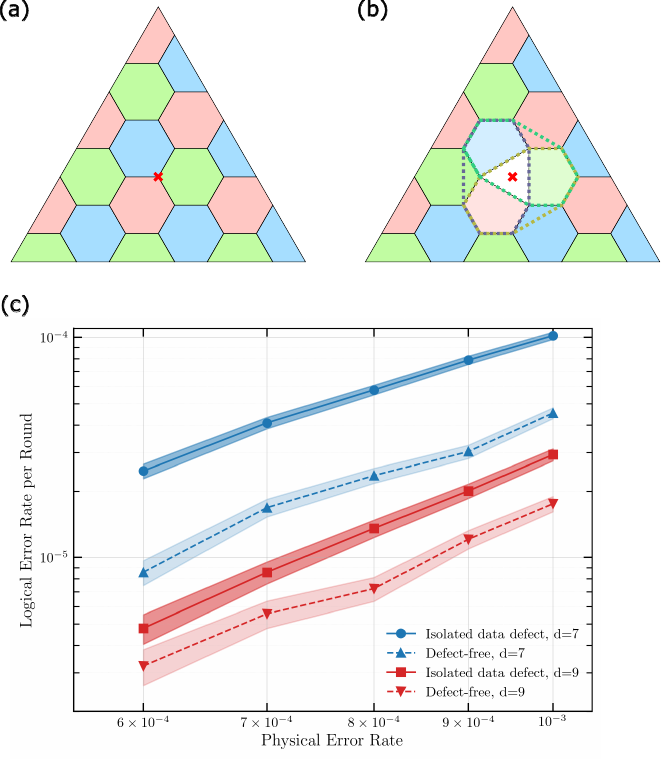}
	\caption{\textbf{Schematic diagrams of an isolated data qubit defect. }(a) A color code patch with code distance \( d = 7 \) containing an isolated data qubit  defect at its geometric center. In all subsequent discussions in this paper, a round refers to one complete cycle of syndrome extraction for all stabilizers.(b) The same color code patch after repairing the isolated data qubit defect using the superstabilizer scheme. (c) Numerical simulation results for color codes with code distances \( d = 7 \) and \( d = 9 \), respectively, obtained under the standard error model. The data qubit defect is placed at the geometric center of the code patch in both cases.}
	\label{fig:data_defect}
\end{figure}

To validate the feasibility of the proposed scheme, we performed numerical simulations using Stim circuits \cite{gidney2021stim} for defect-free configurations and isolated data qubit defect configurations at code distances 7 and 9. More simulation details can be found in Appendix~\ref{app:simulation}. The simulation results are shown in Fig.~\ref{fig:data_defect}(c). At code distance 7, compared with the defect-free configuration, the configuration with an isolated data defect exhibits a slight increase in logical error rate; nevertheless, the trend of logical error rate with respect to physical error rate remains consistent, and no sharp rise in error rate due to the defect is observed. When the code distance is increased to 9, the logical error rates for both configurations decrease by more than one order of magnitude, with the defective configuration still maintaining a relatively low error rate. These results demonstrate that the superstabilizer method proposed in this paper for the color code can effectively suppress logical errors induced by isolated data defects and maintain stable error correction performance as the code distance increases.
\subsection{Ancilla Defect}
\label{subsec:ancilla}
For isolated ancilla qubit defects, the normal measurement of each stabilizer relies on a Bell pair formed by two ancilla qubits; consequently, a defect in any single ancilla qubit renders the corresponding stabilizer unmeasurable. Given this structural characteristic, this paper focuses on the scenario where both ancilla qubits constituting the same stabilizer are defective. Unless otherwise specified, all references to “isolated ancilla qubit defects” in this paper refer to an isolated pair of adjacent ancilla qubit defects. This scenario serves as a representative configuration of ancilla qubit defects that captures the impact of defects on the encoding structure. For this scenario, we propose three feasible handling schemes, which are described and analyzed in detail in this section.
\begin{figure}[!tbp]
	\centering
	\includegraphics[width=\linewidth]{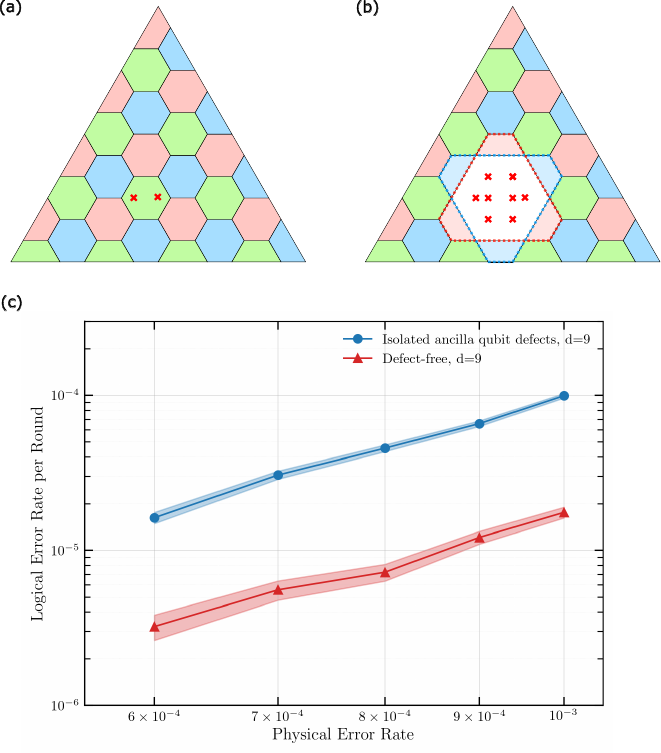}
	\caption{\textbf{Schematic diagrams of isolated ancilla qubit defects.} (a) A color code patch with code distance \( d = 9 \) containing isolated ancilla qubit defects. (b) The same color code patch after repairing the ancilla qubit defects using the superstabilizer scheme. (c) Numerical simulation results for color codes with code distances \( d = 9 \), comparing defect-free configurations with those containing isolated ancilla qubit defects.}
	\label{fig:ancilla_defect_SS}
\end{figure}

\textbf{Superstabilizer:} For ancilla qubit defects, the superstabilizer approach is adapted to handle the failure of a pair of ancilla qubits that perform syndrome extraction. As illustrated in Fig.~\ref{fig:ancilla_defect_SS}(b), the defective ancilla pair and the data qubits directly coupled to them are removed from the stabilizer measurement schedule. This operation transforms the original stabilizer checks into a set of gauge checks with reduced weight. Specifically, by deactivating the six data qubits neighboring the defective ancilla pair, we obtain 12 weight-4 gauge checks.

Within this reconfigured lattice, four independent superstabilizers of weight 12 can be constructed as products of these gauge checks, ensuring that the resulting measurements yield deterministic outcomes. While this approach successfully preserves the logical qubit count and introduces no additional time overhead, it incurs a greater reduction in code distance compared to the data defect case: the code distance decreases by 2 in both bases.

\begin{figure*}[!tbp]
	\centering
	\includegraphics[width=\linewidth]{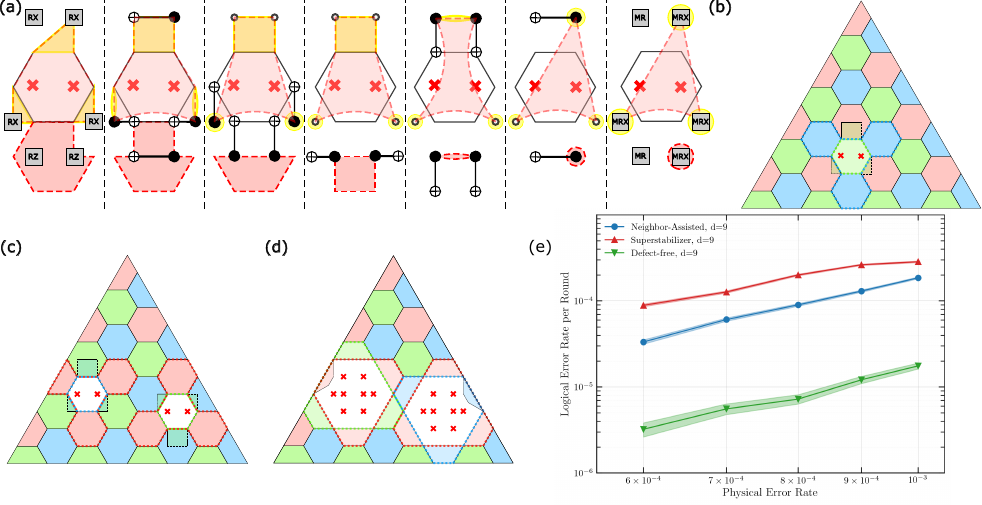}
	\caption{\textbf{Schematic illustrations and numerical results for the Neighbor-Assisted scheme in handling ancilla qubit defects.}  (a) X-type syndrome extraction circuit. The red markers in the figure indicate the X stabilizer flows to be measured, while the yellow markers represent the gauge operators actually measured. While the neighboring ancilla qubits are borrowed to measure gauge operators, the remaining ancilla qubits extract X-stabilizer syndromes normally. (b) Principle of the scheme. The ancilla qubits corresponding to the three red stabilizers surrounding the defective stabilizer measure three gauge operators, whose product forms a superstabilizer that replaces the original defective stabilizer. (c) Example demonstrating the advantage. With the Neighbor-Assisted scheme, the defective color code exhibits a code distance reduction of 2. (d) superstabilizer solution for the same defect distribution as in Fig.~\ref{fig:ancilla_defect_NA}(c), resulting in a code distance reduction of 4. (e) Numerical simulation results comparing the logical error rates of the Neighbor-Assisted scheme and the superstabilizer scheme, corresponding to the circuit configurations in Fig.~\ref{fig:ancilla_defect_NA}(c) and (d), respectively. }
	\label{fig:ancilla_defect_NA}
\end{figure*}
\textbf{Neighbor-Assisted:} Building upon the existing SNL scheme for surface codes \cite{leroux2025snakes}, we propose a defect compensation scheme that utilizes neighboring ancilla qubits, termed the Neighbor-Assisted scheme. As illustrated in Fig.~\ref{fig:ancilla_defect_NA}(b), this scheme utilizes the ancilla qubits associated with the three red stabilizer plaquettes adjacent to the defective stabilizer to measure three gauge operators. The product of these three gauge operators forms a superstabilizer corresponding to the original defective stabilizer. This operation forces the three blue stabilizer plaquettes adjacent to the red region to multiply together due to commutation constraints, collectively forming a larger superstabilizer. By reconstructing the stabilizer measurement paths, we are able to bypass the faulty region.

This scheme resembles the SNL scheme for surface codes \cite{leroux2025snakes} in that both handle defects by reusing neighboring ancilla qubits. However, due to the denser arrangement of stabilizers in color codes, it is not possible to sacrifice the code distance in only one basis (by reducing it by two) while preserving that in the other basis, as achieved by the Neighbor-Assisted scheme. A detailed analysis is provided in the Appendix~\ref{app:Neighbor-Assisted}. Consequently, for color codes, the Neighbor-Assisted scheme must be applied uniformly to both bases, as shown in Fig.~\ref{fig:ancilla_defect_NA}(b), ultimately reducing the code distance by two in both bases.

The implementation of the Neighbor-Assisted scheme consists of two measurement cycles. In the first cycle, all stabilizers except the defective one extract both X-type and Z-type syndromes sequentially in the conventional manner. In the second cycle, the X-type (or Z-type) syndrome of the defective stabilizer is obtained by measuring the corresponding X-type (or Z-type) gauge operators. Specifically, the three X-type (or Z-type) gauge operators are measured sequentially using the ancilla qubits in the red region, and the measurement results are multiplied to indirectly obtain the syndrome value of the defective stabilizer. Meanwhile, the remaining stabilizers not involved in the gauge operator measurements continue to extract syndromes using the standard procedure. Figure~\ref{fig:ancilla_defect_NA}(a) illustrates the timing and control logic of this scheme for the X-type syndrome extraction circuit during the second cycle. The advantage of this scheme is not pronounced for isolated ancilla qubit defects but becomes significant for certain specific defect clusters, such as the one shown in Fig.~\ref{fig:ancilla_defect_NA}(c). For the defect configuration in Fig.~\ref{fig:ancilla_defect_NA}(c), the Neighbor-Assisted scheme reduces the code distance by only two, whereas the conventional superstabilizer scheme (depicted in Fig.~\ref{fig:ancilla_defect_NA}(d)) leads to a code distance reduction of four. The numerical simulation results presented in Fig.~\ref{fig:ancilla_defect_NA}(e) validate this performance difference, demonstrating that the logical error rate of the Neighbor-Assisted scheme is substantially lower than that of the superstabilizer scheme.

\textbf{iSWAP-Mediated:}
For superconducting platforms that support $XX$ and $YY$ interactions, recent advances in gate schemes \cite{chen2024one,chen2025efficient} have enabled the simultaneous implementation of high-fidelity CZ and iSWAP gates. Notably, the CXSWAP gate, which is equivalent to the iSWAP gate, facilitates qubit exchange and entanglement in a single operation, thereby providing a foundation for constructing Bell pairs between ancilla qubits. Inspired by the Halma scheme for surface codes \cite{zhou2024halma}, we leverage this property to propose a third approach for handling isolated ancilla qubit defects, termed the iSWAP-Mediated scheme, as illustrated in Fig.~\ref{fig:circuit_iM}.

This scheme operates in cycles of two full rounds of syndrome extraction, divided into four stages. In the first stage, syndromes are extracted from all intact X-type stabilizers in the standard manner. At the end of the circuit, the CX gates on both sides of the defective region are replaced with CXSWAP gates, which swap the positions of ancillary and data qubits to prepare for Bell pair formation in the subsequent stage, as depicted in steps 1–2 of Fig.~\ref{fig:circuit_iM}. In the second stage, using the two prepared Bell pairs along with iSWAP gates, the scheme simultaneously extracts both X- and Z-type syndromes for the defective stabilizers. It also extracts syndromes for the two Z-type stabilizers in the right neighboring region and the four X-type stabilizers in the remaining neighboring regions, while non-neighboring X-type stabilizers are extracted using the standard approach, as shown in steps 4–15. Because the distribution of ancillary and data qubits changes significantly after the second stage, the third stage performs qubit resetting alongside syndrome extraction. This stage again employs the two Bell pairs and iSWAP gates to simultaneously extract X- and Z-type syndromes for the defective stabilizers, as well as the two X-type stabilizers in the right neighboring region and the four Z-type stabilizers in the other neighboring regions, resetting all qubits except the two side ancilla qubits. The fourth stage extracts syndromes from all intact Z-type stabilizers in the standard manner while resetting the ancilla qubits used in the preceding two stages. Detailed circuits and further details of the complete iSWAP cycle are provided in Appendix~\ref{app:iSWAP-Mediated}.

In terms of time overhead, a defect-free color code circuit requires 20 time steps of two-qubit gate operations and four rounds of measurement/reset operations to complete two full rounds of syndrome extraction. In contrast, the iSWAP-Mediated scheme accomplishes the same two rounds of syndrome extraction in 30 time steps of two-qubit gate operations while still requiring only four rounds of measurement/reset operations. From the perspective of ideal gate-sequence duration, this corresponds to a 50\% increase in time overhead over the defect-free circuit. However, on mainstream superconducting quantum computing platforms, measurement and reset operations typically dominate the cycle time, as they are substantially slower than two-qubit gates. Since the iSWAP-Mediated scheme maintains the same number of measurement/reset operations as the defect-free circuit, the actual increase in time overhead is significantly less than 50\%. Therefore, when measurement and reset times dominate, this scheme effectively handles isolated ancilla qubit defects while preserving the code distance, incurring only modest time overhead and demonstrating favorable hardware adaptability and significant potential for further optimization.

\begin{figure*}[!tbp]
	\centering
	\includegraphics[width=\linewidth]{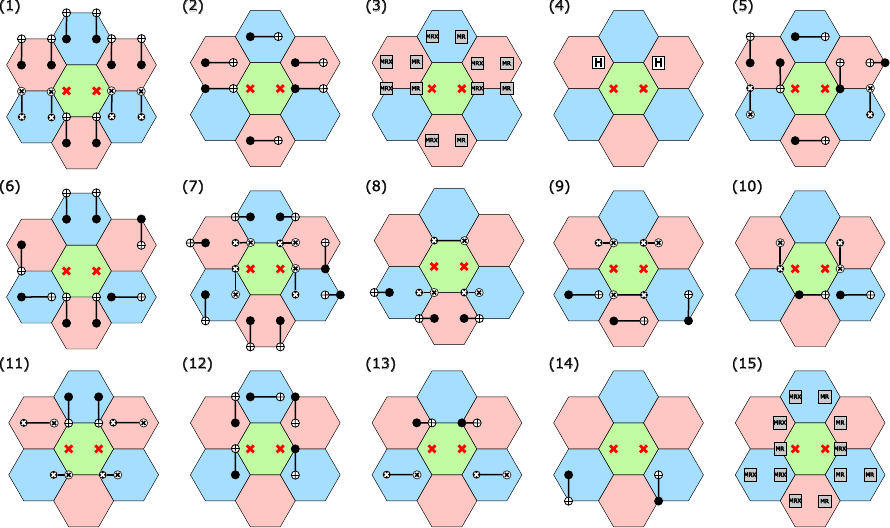}
	\caption{\textbf{Circuit diagram of the iSWAP-Mediated scheme.} The first two time steps correspond to a normal syndrome extraction cycle for all non-defective stabilizers, during which CXSWAP gates are applied in advance to swap ancillary and data qubits, preparing Bell pairs for the subsequent cycle. In the remaining time steps shown, the circuit extracts the X and Z syndromes of the defective stabilizer, the Z syndromes of the upper two regions, and the X syndromes of the other four regions. The complete circuit is provided in Appendix~\ref{app:iSWAP-Mediated}.}
	\label{fig:circuit_iM}
\end{figure*}
\begin{figure}[!tbp]
	\centering
	\includegraphics[width=\linewidth]{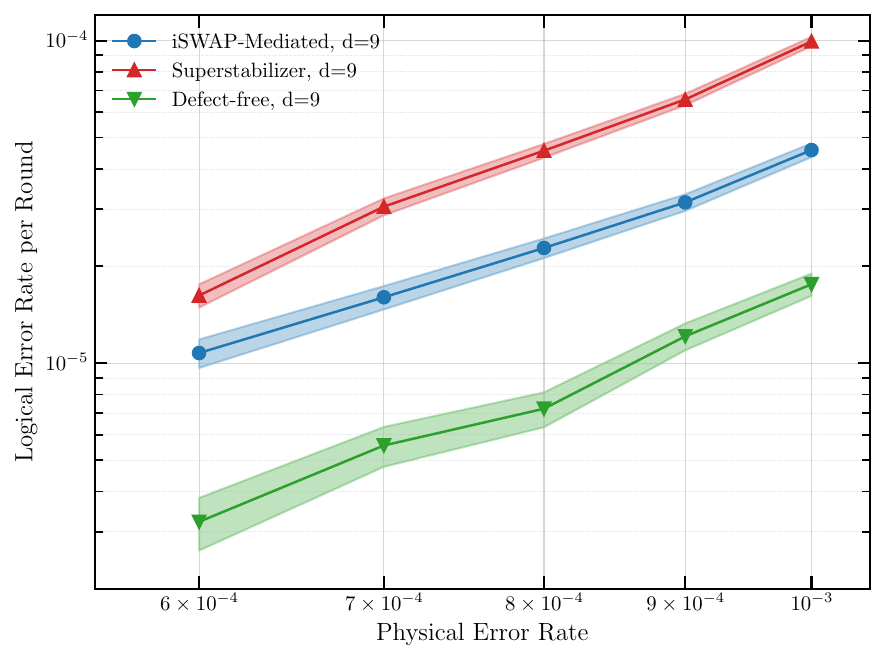}
	\caption{Simulation results for isolated ancilla qubit defects, with the defect distribution as depicted in Fig.~\ref{fig:ancilla_defect_SS}(a).}
	\label{fig:sim_iM}
\end{figure}
\subsection{Coupler Defect}
\label{subsec:coupler}
For an isolated defective two-qubit gate (coupler), the defective coupler can be treated as a data defect and handled using the superstabilizer scheme, which results in a code distance reduction of one along both the X and Z bases. Alternatively, it can be treated as an ancilla defect and addressed through the superstabilizer, Neighbor-Assisted, or iSWAP-Mediated schemes. The first two approaches lead to a code distance reduction of two, whereas the latter incurs no distance loss. A defective coupler connecting two ancilla qubits must be treated as an ancilla defect. Therefore, when iSWAP gates are unavailable, an isolated internal coupler defect is treated as a data defect; when the iSWAP-Mediated scheme can be employed, it is treated as an ancilla defect.

It is worth noting that coupler defects, owing to their higher connectivity compared to data or ancilla defects, may offer optimization opportunities when the Neighbor-Assisted or iSWAP-Mediated schemes are applied. However, no significant advantage has been identified thus far. Consequently, in the following discussion, coupler defects will continue to be classified as either data or ancilla defects as appropriate.
\section{Defective Lattice Color Code Adapter}
\label{sec:Adapter}
In this section, we propose and prove a superstabilizer defect-handling scheme applicable to data qubit defects in arbitrary stabilizer codes. Based on this scheme, we discuss defects at boundaries and corners, and by synthesizing all the above content, we finally propose a defect-adaptive superstabilizer architecture for color codes.
\subsection{Universal superstabilizer Scheme}
\label{subsec:superstabilizer}
By reducing ancilla qubits and coupler defects to data qubit defects, we propose a complete treatment scheme based on the superstabilizer approach for defect clusters formed by multiple adjacent data qubit defects. We demonstrate that after applying this scheme, the code distance decreases at most to the number of defective data qubits. This is illustrated by proving the following propositions.

We first introduce a key tool: for any stabilizer code and any physical qubit, one can always recombine the stabilizer generators so that the \(X\) and \(Z\) operators on that qubit each anticommute with at most one generator. This property lays the foundation for handling defective qubits.

\begin{figure}[!tbp]
	\centering
	\includegraphics[width=\linewidth]{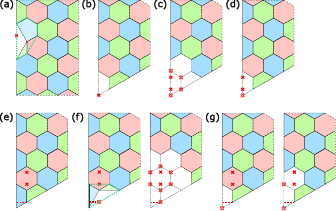}
	\caption{\textbf{Defect handling at boundaries and corners}
		(a) Isolated data qubit defect on the boundary: multiply the two defective stabilizers to form a superstabilizer.  
		(b) Isolated data qubit defect at a corner: directly discard the defective stabilizer.  
		(c) Handling isolated ancilla qubit defects at a corner by disabling neighboring data qubits, reducing the code distance by 2.  
		(d) Handling a corner ancilla qubit defect by leveraging the properties of corner data qubit defects, reducing the ancilla defect to a data qubit defect and lowering the code distance by 1.  
		(e) Schematic of a corner defect cluster consisting of a pair of ancilla qubit defects and a coupler defect.  
		(f) Reducing the coupler defect to an adjacent data qubit defect, reducing the code distance by 3.  
		(g) Reducing the coupler defect to an ancilla qubit defect: after disabling the corner defective stabilizer, the inner ancilla qubit defect becomes a new corner defect, which can be addressed by disabling the stabilizer, resulting in a code distance reduction of 2.}
	\label{fig:defectcluster_boundary}
\end{figure}
\begin{figure*}[!tbp]
	\centering
	\includegraphics[width=0.8\linewidth]{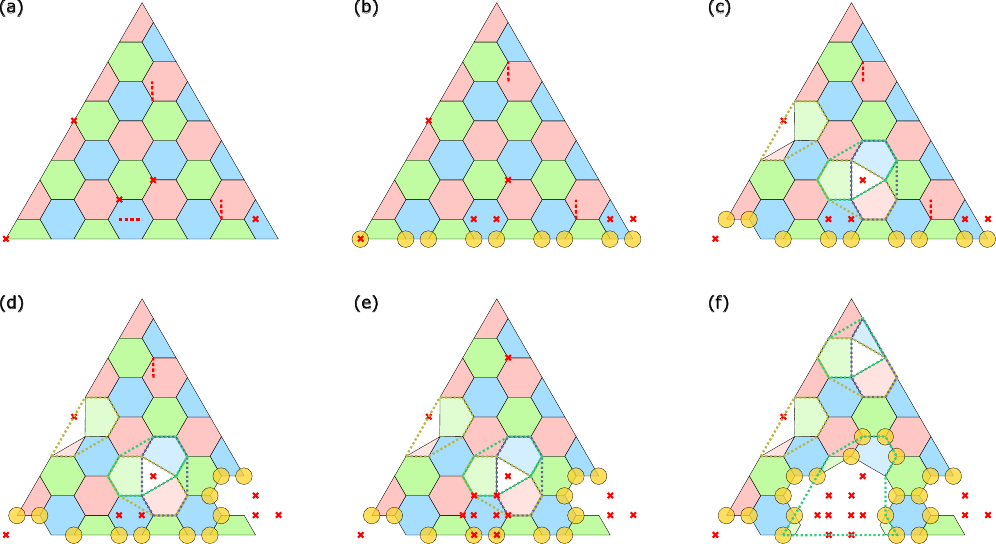}
	\caption{\textbf{Construction Process of the Adaptive Scheme for Color Code Defects}(a) Defect distribution: Initial distribution of data qubit and ancilla qubit defect.
		(b) Ancilla qubit defect handling: Each ancilla qubit defect is treated as if its neighboring ancilla qubit is defective, with the logical operator highlighted in yellow. 	(c) Data qubit defect handling: The stabilizers are initialized as the set in the defect-free case. For each data qubit defect \(d\), all stabilizers that act nontrivially on \(d\) (i.e., with weight one on \(d\)) are identified. One such stabilizer is arbitrarily selected for removal, and the remaining selected stabilizers are updated by multiplying them with the removed stabilizer. (d)Screening and handling of corner ancilla and coupler defects: Coupler defects are temporarily reduced to the ancilla qubit defect they connect. Subsequently, for defects on ancilla qubits that participate in the measurement of only one X and one Z stabilizer, a sequential check is performed within the support set of the X/Z stabilizer to determine whether there exists a data qubit that is stabilized by exactly one X stabilizer and one Z stabilizer, respectively. If such a data qubit exists, the corresponding stabilizer is temporarily removed, and the ancilla/coupler defects in this region are reduced to that data qubit defect, after which the detection process is re-executed.	(e) Reduction of ancilla and coupler defects to data qubits: For the remaining coupler and ancilla qubit defect, coupler defects are reduced to the data qubit defects they connect (with couplers between two ancilla qubits treated as ancilla qubit defects), and ancilla qubit defects are reduced to all their neighboring data qubit defects.  (f) Repetition of the first step: After completing the above reductions, the first step is repeated, ultimately yielding a well-defined set of superstabilizers that compensate for the defect cluster.}
	\label{fig:defectcluster_diag}
\end{figure*}

\begin{proposition}[Stabilizer Normalization]
	For any stabilizer code with stabilizer generators \(\{S_i\}\) and any physical qubit \(q\), there exists a set of stabilizer generators \(\{S'_i\}\) (obtained by invertible \(\mathbb{F}_2\) row operations on the check matrix) such that \(X_q\) and \(Z_q\) each anticommute with at most one generator in \(\{S'_i\}\).
\end{proposition}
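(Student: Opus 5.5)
The plan is to work in the symplectic (check-matrix) representation and do Gaussian elimination over $\mathbb{F}_2$ on the two columns belonging to qubit $q$. Write each generator $S_i$ as a row $(x_i \mid z_i)\in\mathbb{F}_2^{2n}$. Then $X_q$ anticommutes with $S_i$ exactly when $S_i$ has a $Z$ or $Y$ on $q$, that is, when $z_{i,q}=1$. Likewise $Z_q$ anticommutes with $S_i$ exactly when $x_{i,q}=1$. So the statement reduces to a linear-algebra claim about the $m\times 2$ submatrix $M$ with rows $(x_{i,q}, z_{i,q})$: after invertible row operations on the full check matrix, each of the two columns of $M$ should contain at most one nonzero entry. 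Row operations correspond to replacing $S_i$ by $S_iS_j$ (up to phase, which does not matter here). Any invertible $\mathbb{F}_2$ row operation preserves the generated group, so the new set is still a generating set for the same stabilizer group. It also stays independent if the original set was.

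The key steps are as follows.

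\begin{enumerate}
\item \textbf{Clear the $z$-column.} If no row has $z_{i,q}=1$, this column is already done. Otherwise pick a pivot row $S_a$ with $z_{a,q}=1$, and replace every other $S_i$ with $z_{i,q}=1$ by $S_iS_a$. Afterwards only $S_a$ has a $1$ in the $z$-column, so $X_q$ anticommutes with at most one generator.
\item \textbf{Clear the $x$-column among the other rows.} Consider the rows other than $S_a$. If none has $x_{i,q}=1$, skip this step. Otherwise choose a pivot $S_b$ ($b\neq a$) with $x_{b,q}=1$ and multiply every other such row $S_i$ ($i\neq a,b$) by $S_b$. Since $z_{b,q}=0$, this does not disturb the $z$-column.
\item \textbf{Fix the $x$-entry of $S_a$.} If $x_{a,q}=1$ and $S_b$ exists, replace $S_a$ by $S_aS_b$. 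This clears $x_{a,q}$ and leaves $z_{a,q}=1$, again because $z_{b,q}=0$.
\end{enumerate}

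The only remaining case is where $x_{a,q}=1$ but no other row has an $x$-entry on $q$. Then $S_a$ is the unique row with a nonzero $x$-entry and also the unique row with a nonzero $z$-entry. So $X_q$ anticommutes with exactly one generator ($S_a$), and $Z_q$ anticommutes with exactly one generator (also $S_a$). The statement only asks for "at most one each", so this case is consistent with it, though in it both anticommutations land on the same generator. The rank of $M$ is at most $2$, and the procedure is ordinary reduced row-echelon form restricted to two columns.

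I expect the main obstacle to be conceptual bookkeeping rather than computation. The delicate points are:
\begin{itemize}
\item checking that the later eliminations never reintroduce $1$'s into a column already cleared, which is the role of choosing the $x$-pivot with $z_{b,q}=0$;
\item stating clearly that the pivot in the degenerate case can carry both a $1$ in the $x$-column and a $1$ in the $z$-column, i.e.\ act on $q$ as $Y$;
\item remarking that phases and commutativity are unaffected, since products of commuting Paulis remain in the stabilizer group.
\end{itemize}
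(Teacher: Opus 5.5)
Your proposal is correct and is essentially the paper's proof. Both reduce the claim to Gaussian elimination over $\mathbb{F}_2$ on the $m\times 2$ submatrix of the check matrix for qubit $q$, with row operations lifted to products of generators. Your explicit three-step reduction is more careful than the paper's in two ways: it produces the fully reduced echelon form (the paper's ``e.g., row echelon form'' could leave two $1$'s in the $Z_q$ column), and it correctly flags the rank-one case where a single generator acts as $Y$ on $q$ and anticommutes with both $X_q$ and $Z_q$, which the paper's later claim in the proof of Gauge Space Expansion that no generator anticommutes with both overlooks.
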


\begin{proof}
	Let \(H = (H_x \mid H_z) \in \mathbb{F}_2^{m \times 2n}\) be the stabilizer check matrix, where \(H_x\) (\(H_z\)) indicates the positions of \(X\) (\(Z\)) Pauli operators. Fix qubit \(q\) (w.l.o.g. \(q=1\)) and define the \(m \times 2\) matrix \(M\) whose \(i\)-th row is \(\big( (H_z)_{i,1},\; (H_x)_{i,1} \big)\). The first entry is \(1\) iff \(S_i\) anticommutes with \(X_1\), and the second entry is \(1\) iff \(S_i\) anticommutes with \(Z_1\). Invertible row operations on \(H\) correspond to replacing stabilizer generators by products among themselves; performing the same row operations on \(M\) yields the matrix associated with the new generators. Since \(M\) has only two columns, performing Gaussian elimination (e.g., reducing to row echelon form) yields a matrix \(M'\) where each column contains at most one non‑zero entry; thus at most one row has a \(1\) in the first column (anticommuting with \(X_q\)) and at most one row has a \(1\) in the second column (anticommuting with \(Z_q\)). The corresponding new generators \(\{S'_i\}\) satisfy the required property, and the stabilizer group remains unchanged because row operations preserve the group.
\end{proof}

Using the normalized generators obtained above, we can extend the original stabilizer code into a subsystem code where the defective qubit \(q\) is “isolated” into the gauge group, thereby allowing us to safely discard it. The following proposition describes this expansion process and its effect on the encoding capacity.

\begin{proposition}[Gauge Space Expansion]
	Let \(\{S'_i\}\) be the normalized stabilizer generators obtained from Stabilizer Normalization. Let \(\mathcal A\) be the set of generators that anticommute with \(X_q\) or \(Z_q\) (so \(|\mathcal A|\le 2\)), and let \(\mathcal R = \{S'_i\}\setminus \mathcal A\). Define \(\mathcal S_{\text{new}} = \langle \mathcal R \rangle\). Then:
	\begin{enumerate}
		\item \(X_q, Z_q \in C(\mathcal S_{\text{new}})\), where \(C(\mathcal S_{\text{new}})\) denotes the centralizer of  \(\mathcal S_{\text{new}}\).
		\item Every logical operator \(L\) of the original stabilizer code can be transformed (by multiplying with an element of \(\langle X_q, Z_q\rangle\) and possibly with elements of \(\mathcal A\)) into an operator \(\tilde L\) that commutes with the whole original stabilizer group and acts as identity on qubit \(q\). Consequently, the restriction of \(\tilde L\) to the remaining qubits yields an operator that commutes with \(\mathcal S_{\text{new}}\); this gives a mapping from original logical operators to logical operators of the subsystem code whose stabilizer is \(\mathcal S_{\text{new}}\) and whose gauge group is \(\langle \mathcal A, X_q, Z_q\rangle\). In particular, the new code can encode at least \(k-2\) logical qubits if the original code encoded \(k\) qubits, and the distance bound of Proposition~3 holds.
	\end{enumerate}
\end{proposition}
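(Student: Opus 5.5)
Part 1 follows from Stabilizer Normalization. By construction every generator in \(\mathcal R\) commutes with both \(X_q\) and \(Z_q\), since the generators that anticommute with either of them are exactly those collected into \(\mathcal A\). Commutation is multiplicative, so every element of \(\mathcal S_{\text{new}}=\langle\mathcal R\rangle\) commutes with \(X_q\) and \(Z_q\), i.e. \(X_q,Z_q\in C(\mathcal S_{\text{new}})\). I also record two facts for later. First, \(\mathcal S_{\text{new}}\) is abelian, being a subgroup of the original stabilizer group. Second, each element of \(\mathcal R\) acts on \(q\) only by \(I\), since an element commuting with both \(X_q\) and \(Z_q\) must be trivial on \(q\). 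Hence restricting \(\mathcal S_{\text{new}}\) to the other \(n-1\) qubits gives a faithful abelian group not containing \(-I\).

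For Part 2, take a logical \(L\in C(\mathcal S)\). Its action on \(q\) is one of \(I, X_q, Y_q, Z_q\), so multiplying by the suitable element of \(\langle X_q,Z_q\rangle\) (up to phase) gives \(L'\) acting trivially on \(q\). Now \(L'\) may anticommute with members of \(\mathcal A\), since \(X_q\) or \(Z_q\) does. I would handle this case by case with the \(2\times2\) structure from the normalization.

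\begin{itemize}
\item If \(\mathcal A=\{A_x,A_z\}\) with \(A_x\) anticommuting only with \(X_q\) and \(A_z\) only with \(Z_q\), then \(A_x\) restricted to \(q\) is \(Z_q\) and \(A_z\) restricted to \(q\) is \(X_q\). In this case I simply take \(\tilde L=L'\).
\item If instead a single generator anticommutes with both, it acts as \(Y_q\). I treat this similarly.
\end{itemize}

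The claim in the statement that \(\tilde L\) commutes with the whole original group cannot hold in general, because \(L'\) may anticommute with \(A_x\). The honest content is the consequence that follows it. Since \(\tilde L\) commutes with \(\mathcal R\) and the elements of \(\mathcal R\) are trivial on \(q\), the restriction \(\tilde L|_{\bar q}\) commutes with \(\mathcal S_{\text{new}}|_{\bar q}\). Moreover \(\tilde L\) is not in \(\langle \mathcal A, X_q, Z_q, \mathcal R\rangle\) whenever \(L\) is nontrivial. Here the main obstacle is that the multiplication by \(\mathcal A\) mentioned in the statement must be used only to preserve nontriviality, not commutation. I would verify nontriviality by contradiction: if \(\tilde L|_{\bar q}\) were in the gauge group restricted to \(\bar q\), then \(L\) would lie in \(\langle \mathcal S, X_q, Z_q\rangle\). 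This can be ruled out for a genuine logical by checking commutation with \(\mathcal A\) in each case, or else it absorbs at most the two dimensions removed.

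Finally, count qubits. The new code on \(n-1\) qubits has stabilizer rank \(m-|\mathcal A|\). The gauge group restricted to \(\bar q\) adds \(|\mathcal A|\) non-commuting generators, which pair into gauge qubits: one pair if \(|\mathcal A|=2\) and they mutually anticommute after restriction, and otherwise the unpaired ones become stabilizers. The subsystem formula \(n-1=k'+g+r\) then gives \(k'\ge (n-1)-(m-|\mathcal A|)-|\mathcal A|-1\ge k-2\), using \(k=n-m\). This yields the stated bound, and the distance statement is deferred to Proposition~3 as the statement indicates.
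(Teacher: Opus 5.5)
Your route is the paper's: clear qubit $q$ by multiplying $L$ with $X_q^uZ_q^v$, restrict to $\bar q$, and argue nontriviality by showing that otherwise $L\in\langle\mathcal S,X_q,Z_q\rangle$. Your objection to that construction is correct. The paper's proof claims $L'=LX_q^uZ_q^v$ still commutes with $\mathcal S$ ``because $X_q,Z_q\in C(\mathcal S)$''. That is false, since $X_q$ anticommutes with $A_X\in\mathcal S$. You are also right, against the paper, that a single normalized generator can act as $Y_q$ (a row $(1,1)$ of $M'$). Your count via $n-1=k'+g+r$ is a cleaner route to $k'\ge k-2$ than the paper's ``injection with kernel of dimension at most 2''. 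In fact $g+r\le 2g+r\le m$ already gives $k'\ge k-1$.

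The gap is the next step. Because your particular multiplier fails, you conclude that the first assertion of part 2 ``cannot hold in general'', and you use multiplication by $\mathcal A$ only to preserve nontriviality. That gets the role of $\mathcal A$ backwards. Its elements are stabilizers, so multiplying by them changes neither the logical class nor commutation with the abelian group $\mathcal S$; what they change is the action on $q$. In your main case $\mathcal A=\{A_x,A_z\}$, with $A_x|_q=Z_q$ and $A_z|_q=X_q$, these generate all Paulis on $q$. So $\tilde L=LA_x^{\alpha}A_z^{\beta}$, for suitable $\alpha,\beta$, is trivial on $q$ and commutes with all of $\mathcal S$, exactly as the statement says. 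Its restriction even commutes with $A_x|_{\bar q}$ and $A_z|_{\bar q}$, so it is a bare logical. Your $L'|_{\bar q}$ differs from it only by the gauge element $(A_x^{\alpha}A_z^{\beta})|_{\bar q}$. Nontriviality is then immediate. If $\tilde L|_{\bar q}\in\mathcal G_{\mathrm{new}}|_{\bar q}$, then $L\propto sP_q$ with $s\in\mathcal S$ and $P_q$ commuting with $A_x$ and $A_z$. This forces $P_q=I$ and $L\in\mathcal S$, a contradiction.

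The assertion genuinely fails only in a case you did not analyse: $\mathcal A=\{A\}$ with $A|_q=P\in\{X,Y,Z\}$ and $\pm P_q\notin\mathcal S$. Then $P_q$ is a weight-one logical, so $d=1$. For example, take $\mathcal S=\langle Z_1Z_2\rangle$, $q=1$, $L=X_1X_2$. No element of $L\,\langle X_1,Z_1\rangle\langle Z_1Z_2\rangle$ is both trivial on qubit $1$ and commuting with $Z_1Z_2$. If instead $\pm P_q\in\mathcal S$, multiplying by $A$ works. If $\mathcal A=\emptyset$, then $X_q,Z_q\in C(\mathcal S)$ and your multiplier works.

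So the repair is this case split (or the hypothesis $d\ge 2$), not abandoning commutation with $\mathcal S$. As written, your proposal neither proves the assertion where it holds nor correctly locates where it fails. Your ``treat this similarly'' for the $Y_q$ case and the omitted $|\mathcal A|\le 1$ cases are exactly where this distinction matters.
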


\begin{proof}
	By construction, every generator in \(\mathcal R\) commutes with \(X_q\) and \(Z_q\), so \(X_q, Z_q \in C(\mathcal S_{\text{new}})\). This proves (1).
	
	For (2), let \(L\) be any logical operator of the original stabilizer code, i.e., \(L\) commutes with all stabilizers and \(L\notin \mathcal S\) (the original stabilizer group). After normalization, there is at most one generator anticommuting with \(X_q\) and at most one anticommuting with \(Z_q\). Let \(A_X\) (if it exists) be the generator that anticommutes with \(X_q\), and let \(A_Z\) (if it exists) be the generator that anticommutes with \(Z_q\). Since each column of the reduced matrix \(M'\) has at most one non‑zero entry, no generator anticommutes with both \(X_q\) and \(Z_q\); hence \(A_X\) and \(A_Z\) are distinct (if both exist). We can choose exponents \(a,b\in\{0,1\}\) such that \(\tilde L = L X_q^a Z_q^b\) commutes with every element of \(\mathcal A\): if \(A_X\) exists, pick \(a\) so that \(\tilde L\) commutes with \(A_X\) (i.e., \(a=1\) when \(L\) anticommutes with \(A_X\), else \(a=0\)); if \(A_Z\) exists, pick \(b\) similarly. Thus \(\tilde L\) commutes with all generators in \(\mathcal A\) and also with \(\mathcal R\) (since \(L\) does and \(X_q,Z_q\) commute with \(\mathcal R\)). Therefore \(\tilde L\) commutes with the whole original stabilizer group \(\langle \mathcal R \cup \mathcal A\rangle = \mathcal S\).
	
	Now \(\tilde L\) may still have non‑trivial support on qubit \(q\). Because \(X_q\) and \(Z_q\) generate all Pauli operators on \(q\) (up to phases) and both commute with \(\mathcal S_{\text{new}}\), we can multiply \(\tilde L\) by an appropriate product \(X_q^u Z_q^v\) to obtain \(L' = \tilde L X_q^u Z_q^v\) such that \(L'\) acts as \(I_q\) on qubit \(q\) (i.e., \(L' = L_{\text{rest}} \otimes I_q\)). Indeed, if \(\tilde L = O \otimes P_q\) with \(P_q \in \{I,X,Y,Z\}\), choose \(u,v\) so that \(P_q X_q^u Z_q^v = I\) (mod phase). Then \(L'\) still commutes with \(\mathcal S\) because \(X_q,Z_q \in C(\mathcal S)\).
	
	The restriction \(L_{\text{rest}}\) of \(L'\) to the remaining qubits commutes with \(\mathcal S_{\text{new}}\) (since \(L'\) commutes with \(\mathcal R\) and \(\mathcal S_{\text{new}}\) is generated by restrictions of operators in \(\mathcal R\), each of which acts as \(I_q\)). If \(L_{\text{rest}}\) belonged to the gauge group \(\langle \mathcal A, X_q, Z_q\rangle\) (restricted to the remaining qubits), then \(L'\) would be in \(\langle \mathcal A, X_q, Z_q\rangle\) and consequently \(L\) would be in the original gauge group \(\langle \mathcal S, X_q, Z_q\rangle\); but \(L\) was a non‑trivial logical operator, so this does not happen. Hence \(L_{\text{rest}}\) is a non‑trivial logical operator of the subsystem code with stabilizer \(\mathcal S_{\text{new}}\) and gauge group \(\langle \mathcal A, X_q, Z_q\rangle\). This construction gives an injection from the original logical quotient \(C(\mathcal S)/\mathcal S\) into the new logical quotient modulo the gauge group, with a kernel of dimension at most \(2\) (corresponding to logical operators that are equivalent modulo \(\mathcal S\) to elements of \(\langle X_q, Z_q\rangle\)). Therefore the new code can encode at least \(k-2\) logical qubits. The distance bound is treated separately in Proposition~3.
\end{proof}

The above expansion allows us to discard one defective qubit, but we need to know how the code distance changes. The next proposition gives a precise lower bound: discarding one qubit reduces the code distance by at most 1.

\begin{proposition}[Distance Lower Bound for Data Defect]
	For any stabilizer code with logical distance \(d\), after applying Gauge Space Expansion and discarding physical qubit \(q\), let \(\mathcal{G}_{\text{new}} = \langle \mathcal{R},\; \mathcal{A},\; X_q, Z_q \rangle\) be the gauge group of the resulting subsystem code. Its logical distance \(d'\) satisfies \(d' \ge d-1\).
\end{proposition}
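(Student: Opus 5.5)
We argue by contradiction, via a lifting map from dressed logical operators of the new subsystem code back to logical operators of the original code. Let \(P\) be a nontrivial dressed logical operator of the new subsystem code, supported on the remaining \(n-1\) qubits. That is, \(P\) commutes with \(\mathcal S_{\text{new}}=\langle\mathcal R\rangle\), and \(P\notin\mathcal G_{\text{new}}\) (modulo phases). Suppose \(\mathrm{wt}(P)\le d-2\); we derive a contradiction. Concretely, we will build an operator \(\hat P\) on all \(n\) qubits with \(\mathrm{wt}(\hat P)\le \mathrm{wt}(P)+1\le d-1\) that commutes with the full original stabilizer group \(\mathcal S\) and is not in \(\mathcal S\). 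This contradicts the distance \(d\) of the original code.

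\textbf{Construction of the lift.} View \(P\) as an operator on all \(n\) qubits acting as \(I\) on \(q\). It already commutes with every generator in \(\mathcal R\). By Stabilizer Normalization, \(\mathcal A\) contains at most \(A_X\) (the unique generator anticommuting with \(X_q\)) and \(A_Z\) (the unique one anticommuting with \(Z_q\)), and these are distinct. We fix the commutation with \(\mathcal A\) by multiplying by a single-qubit Pauli on \(q\), exactly as in the proof of Gauge Space Expansion. Choose \(a,b\in\{0,1\}\) so that \(\hat P=P\,X_q^aZ_q^b\) commutes with \(A_X\) and \(A_Z\). Multiplying by \(X_q\) flips commutation with \(A_X\) only, and \(Z_q\) flips it with \(A_Z\) only, so such \(a,b\) exist. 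Since \(X_q,Z_q\) commute with \(\mathcal R\), \(\hat P\) commutes with all of \(\mathcal S=\langle\mathcal R\cup\mathcal A\rangle\). Moreover \(\hat P\) differs from \(P\) only on the single qubit \(q\), on which \(P\) was trivial. Hence \(\mathrm{wt}(\hat P)\le\mathrm{wt}(P)+1\).

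\textbf{Nontriviality, the main obstacle.} We must show \(\hat P\notin\mathcal S\). Suppose instead \(\hat P\in\mathcal S\). Then \(P=\hat P\,X_q^aZ_q^b\) (up to phase) lies in \(\langle\mathcal S,X_q,Z_q\rangle=\langle\mathcal R,\mathcal A,X_q,Z_q\rangle=\mathcal G_{\text{new}}\), contradicting the choice of \(P\). This step is where the definitions have to be read carefully. The gauge group in the statement is written including \(X_q,Z_q\), while \(P\) lives on the remaining qubits. So I would spell out that membership of \(P\) in \(\mathcal G_{\text{new}}\) is the same as its restriction lying in the restricted gauge group, which is the convention already used in Gauge Space Expansion. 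I would also note the subtlety for dressed versus bare logicals: the argument only uses that \(P\) commutes with \(\mathcal S_{\text{new}}\), so it bounds the dressed distance, which is the relevant one for subsystem codes.

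\textbf{Conclusion.} Combining the steps, every nontrivial logical operator of the new code yields a nontrivial logical \(\hat P\) of the original code. Then \(d\le\mathrm{wt}(\hat P)\le\mathrm{wt}(P)+1\), so \(\mathrm{wt}(P)\ge d-1\), and taking the minimum over \(P\) gives \(d'\ge d-1\). The case where \(A_X\) or \(A_Z\) does not exist is handled by simply setting the corresponding exponent to zero.
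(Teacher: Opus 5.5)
Your proof is correct and is essentially the paper's argument: you pad $P$ with $I_q$, multiply by $X_q^aZ_q^b$ to restore commutation with $\mathcal A$ (hence with all of $\mathcal S$), rule out $\hat P\in\mathcal S$ because it would put $P$ in $\langle\mathcal S,X_q,Z_q\rangle=\mathcal G_{\text{new}}$, and pay at most one unit of weight on $q$. One harmless imprecision, which you inherit from the paper, is the claim that $A_X\neq A_Z$: row reduction can leave a single generator with $Y$ on $q$, so $A_X=A_Z$ is possible, but then a single factor ($a=1$, $b=0$) already fixes the commutation and the bound $\mathrm{wt}(P)\ge d-1$ still holds.
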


\begin{proof}
	Let \(M\) be a non‑trivial logical operator of the new code (after discarding \(q\)), i.e., \(M\) commutes with \(\mathcal{S}_{\text{new}}\) but \(M \notin \mathcal{G}_{\text{new}}\). Define \(P = M \otimes I_q\). Then \(P\) commutes with \(\mathcal{S}_{\text{new}}\) but may anticommute with the removed generators in \(\mathcal A\). Denote by \(A_X\) the generator that anticommutes with \(X_q\) (if it exists) and by \(A_Z\) the generator that anticommutes with \(Z_q\) (if it exists). Because \(X_q\) anticommutes with at most one of them and \(Z_q\) with at most one, there exist \(a,b \in \{0,1\}\) such that \(P'' = P X_q^a Z_q^b\) commutes with every generator in \(\mathcal A\). Hence \(P''\) commutes with all original stabilizers, i.e., \(P'' \in C(\mathcal{S})\) of the original stabilizer group.
	
	If \(P'' \in \mathcal{S}\), then \(P''\) can be written as a product of generators from \(\mathcal{R}\) and \(\mathcal A\). Consequently \(M = P'' X_q^a Z_q^b\) lies in the group generated by \(\mathcal{R}\), \(\mathcal A\), and \(X_q,Z_q\), which is exactly \(\mathcal{G}_{\text{new}}\), a contradiction. Therefore \(P''\) is a non‑trivial logical operator of the original stabilizer code, so its weight \(w(P'') \ge d\). Since \(a,b\in\{0,1\}\), the operator \(X_q^a Z_q^b\) has weight at most \(1\). Thus \(w(P'') \le w(M) + 1\), and we obtain \(w(M) \ge d-1\). Taking the minimum over all non‑trivial logical operators yields \(d' \ge d-1\).
\end{proof}

The above proposition shows that we can discard defective qubits one by one, losing at most 1 in distance at each step. Naturally, applying this result repeatedly handles multiple defects, and the measurement process remains feasible in practice. We summarize these conclusions in the following corollary.

\begin{corollary}[Defect Cluster and Measurement Feasibility]
	For any stabilizer code, suppose we sequentially apply Stabilizer Normalization and Gauge Space Expansion to \(f\) physical qubits (defects) and then discard those qubits. Then:
	\begin{enumerate}
		\item \textbf{Distance bound:} The logical distance \(d_f\) of the final subsystem code satisfies \(d_f \ge \max(0,\, d - f)\), where \(d\) is the original distance.
		\item \textbf{Measurement feasibility:} At each step, the new stabilizer group \(\mathcal{S}_{\text{new}}\) can be generated by products of gauge operators of the original code (with the discarded qubits removed). Consequently, measuring the new stabilizers can be reduced to measuring those gauge operators, which is feasible.
	\end{enumerate}
\end{corollary}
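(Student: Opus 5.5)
The plan is an induction on the number \(f\) of discarded qubits, with Proposition~3 as the inductive step and a direct argument for measurement feasibility.

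\textbf{Distance bound.} Set \(d_0 = d\). Let \(d_j\) be the distance after the \(j\)-th defect is processed. At step \(j+1\) the current code is a subsystem code, not a stabilizer code, so Propositions~1--3 cannot be quoted verbatim. I will rerun their arguments with the current stabilizer group \(\mathcal S_j\) and current gauge group \(\mathcal G_j\).

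First, Stabilizer Normalization only uses the check matrix of \(\mathcal S_j\), so it applies unchanged. The removed generators \(\mathcal A_{j+1}\) are then moved into the gauge group, giving
\[
\mathcal G_{j+1} = \langle \mathcal G_j, \mathcal A_{j+1}, X_q, Z_q\rangle .
\]

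Next, take a nontrivial dressed logical operator \(M\) of the new code. Exactly as in the proof of Proposition~3, there are \(a,b\in\{0,1\}\) such that \(P'' = (M\otimes I_q)X_q^aZ_q^b\) commutes with \(\mathcal S_j\). If \(P''\) lay in \(\mathcal G_j\), then \(M\) would lie in \(\mathcal G_{j+1}\), a contradiction. So \(w(M) \ge d_j - 1\).

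One detail needs checking here. \(X_q\) and \(Z_q\) commute with \(\mathcal S_{j+1}\) but may fail to commute with old gauge operators. This does no harm, because the distance of a subsystem code is defined through the centralizer of the stabilizer, with operators counted modulo the gauge group.

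This gives \(d_{j+1} \ge d_j - 1\), hence \(d_f \ge d - f\). The \(\max(0,\cdot)\) in the statement is trivial, covering the case where no logical qubit survives.

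\textbf{Measurement feasibility.} I will prove by induction the invariant that every element of \(\mathcal S_j\) is a product of operators of the form \(S|_{\overline{D_j}}\). Here \(S\) ranges over the original generators \(\{S_i\}\) and \(D_j\) is the set of discarded qubits.

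Normalization replaces generators only by products of generators, so the invariant survives it. \(\mathcal R\) is a subset of these products, and its elements act trivially on \(q\), so restricting them to drop \(q\) keeps the invariant. Each such restricted generator is precisely a gauge operator of the kind measured in the paper's schemes, namely an original check with the defective qubits removed.

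Measuring these gauge checks and multiplying the outcomes therefore yields the superstabilizer value. The outcome is deterministic because the product commutes with every gauge operator: it lies in \(\mathcal S_{j+1}\subseteq C(\mathcal G_{j+1})\).

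\textbf{Main obstacle.} The hard point is the claim \(\mathcal S_{j+1}\subseteq C(\mathcal G_{j+1})\), i.e.\ that normalized generators kept in \(\mathcal R\) commute with all earlier gauge operators. Old gauge operators include earlier \(X_{q'}\) and \(Z_{q'}\), and also earlier removed generators \(\mathcal A\). Elements of \(\mathcal S_j\) already commute with all of these, by the inductive hypothesis. Since \(\mathcal R\subseteq\mathcal S_j\) and elements of \(\mathcal R\) commute with \(X_q\) and \(Z_q\), the claim follows.

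The one step to handle carefully is restricting to drop \(q\). This is valid only because elements of \(\mathcal R\) act as identity on \(q\) once they commute with both \(X_q\) and \(Z_q\).
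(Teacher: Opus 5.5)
Your proposal is correct and follows the paper's route --- induction on $f$ with a loss of at most one per discarded qubit via the argument of Proposition~3, and measurement feasibility by restricting normalized generators that act trivially on $q$. It is more careful than the paper: the paper simply invokes Proposition~3, which is stated for stabilizer codes, on the intermediate subsystem codes, whereas you rerun it with $\mathcal G_j$ and check $\mathcal S_{j+1}\subseteq C(\mathcal G_{j+1})$. One caveat applies to both arguments: reading off a superstabilizer value as the product of individual gauge-check outcomes is only justified when those checks commute pairwise, which is automatic for same-Pauli-type products in a CSS code such as the color code. For an arbitrary stabilizer code it can fail, since a kept generator may have to combine checks carrying $X$, $Y$ and $Z$ on $q$, whose restrictions pairwise anticommute.
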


\begin{proof}
	(1) We prove by induction on \(f\) that for any nontrivial logical operator \(M_f\) of the final subsystem code (after discarding \(f\) qubits), there exists a logical operator \(L\) of the original stabilizer code with \(w(L) \le w(M_f) + f\). The base case \(f=0\) is trivial. For the inductive step, assume the claim holds for \(f-1\) defects. Consider the code obtained after discarding the first \(f-1\) defects; by the induction hypothesis, any nontrivial logical operator \(M_{f-1}\) of that code satisfies \(w(L) \le w(M_{f-1}) + (f-1)\) for some original logical operator \(L\). Now discard one more qubit \(q_f\). Applying Proposition~3 to \(M_{f-1}\) (viewed as a logical operator of the code before discarding \(q_f\)) yields \(w(M_{f-1}) \ge d_{f-1} - 1\), where \(d_{f-1}\) is the distance of the code before discarding \(q_f\). By the induction hypothesis, \(d_{f-1} \ge d - (f-1)\). Combining these inequalities gives \(w(M_{f-1}) \ge d - f\), which is the desired bound for \(f\) defects. Taking the minimum over all \(M_f\) gives \(d_f \ge d - f\).
	
	(2) Consider a single qubit \(q\). In the original code, every stabilizer generator is itself a gauge operator. After Proposition~1 and Proposition~2, the new stabilizer generators are taken from \(\mathcal{R}\), i.e., those transformed generators \(S'_i\) that commute with \(X_q\) and \(Z_q\). For a Pauli operator on qubit \(q\), commuting with both \(X_q\) and \(Z_q\) forces it to be the identity \(I_q\) (since \(Y_q\) anticommutes with both). Hence each \(S'_i\) has trivial support on \(q\); discarding \(q\) simply removes that factor, yielding an operator \(R_i\) on the remaining qubits. Because the \(S'_i\) originally commuted with all stabilizers, the \(R_i\) commute with each other and generate \(\mathcal{S}_{\text{new}}\). Each \(R_i\) is the restriction of a gauge operator \(S'_i\) of the original code, so it belongs to the gauge group of the original code (restricted to the remaining qubits). Therefore, to measure a new stabilizer \(R_i\), one can measure the original gauge operator \(S'_i\) (a product of original stabilizer generators) and ignore the outcome on the discarded qubits; the measurement outcome directly yields the eigenvalue of \(R_i\). For multiple qubits, we iterate this argument: after discarding one qubit, the remaining code still has the structure of a stabilizer code (with a gauge group), and the same reasoning applies to the next qubit. Hence a feasible measurement scheme exists throughout the whole process.
\end{proof}

In summary, we have proposed and proved a scheme for code restructuring applicable to arbitrary subsystem codes. The scheme handles physical defects by removing defective data qubits and reassigning stabilizer generators. Moreover, we show that the distance of the new code obtained after restructuring decreases by at most \(f\), where \(f\) is the number of defects.
\subsection{Boundary and Corner}
\label{subsec:Boundary}
In the surface code, defects near the boundary are typically handled using boundary deformation. However, the superstabilizer scheme proposed in the previous section does not treat boundary defects as a special case. In fact, from the perspective of stabilizers, whether a defect is located on the boundary (or at a corner) only determines the number of stabilizers it affects.

Taking data qubits as an example, Fig.~\ref{fig:defectcluster_boundary}(a) and (b) illustrate isolated data qubit defects on the boundary and at a corner, respectively. Unlike an interior defect, which affects three pairs of $X$/$Z$ stabilizers, a data qubit defect on the boundary disrupts only two pairs of $X$/$Z$ stabilizers. According to the superstabilizer scheme, one pair of $X$/$Z$ stabilizers can be discarded, and the remaining pair can be multiplied to form a superstabilizer, as shown in Fig.~\ref{fig:defectcluster_boundary}(a). For a data qubit defect at a corner, since it affects only one pair of $X$/$Z$ stabilizers, the affected stabilizers are directly discarded following the scheme, as depicted in Fig.~\ref{fig:defectcluster_boundary}(b).

Both coupler defects and ancilla qubit defects can be reduced to data qubit defects. As previously discussed, an isolated coupler defect can be equivalent to a data qubit defect or an ancilla qubit defect to which it is connected, while an isolated ancilla qubit defect can be equivalent to defects on all its neighboring data qubits, which typically reduces the code distance by $2$, as shown in Fig.~\ref{fig:defectcluster_boundary}(c). However, a more favorable approach exists for handling ancilla qubit defects located at corners, as illustrated in Fig.~\ref{fig:defectcluster_boundary}(d). According to the superstabilizer scheme, if a corner data qubit is defective, since it is stabilized by only a single $X$/$Z$ stabilizer, the affected stabilizer is simply discarded without generating new superstabilizers via multiplication. Once this stabilizer is discarded, circuit measurements naturally no longer involve the defective ancilla qubit. Therefore, a corner ancilla qubit defect can be reduced to a single corner data qubit defect without the need to disable all its neighboring data qubits.

For coupler defects, if the superstabilizer scheme is applied alone, they are typically reduced to the data qubit defects they connect. However, for coupler defects at corners (which can be defined as couplers connected to corner ancilla qubits), an alternative reduction to corner data qubit defects is also possible. Both approaches result in a code distance reduction of at most $1$, though evaluating their precise impact on the code distance is nontrivial. In our proposed defect adapter, we adopt the latter approach for corner coupler defects. This choice is motivated by the fact that disabling corner data qubits and stabilizers can further generate new corner data qubits and stabilizers, potentially converting original interior ancilla qubit defects into corner ancilla qubit defects, thereby reducing the total number of disabled data qubits, as illustrated in Fig.~\ref{fig:defectcluster_boundary}(f)(g).

The term “corner” in the above discussion is not limited to the geometric corners of the code block but is more broadly defined by the stabilizer structure. Specifically, if an ancilla qubit participates in the measurement of only one pair of $X$ and $Z$ stabilizers, and within the support set of these $X$/$Z$ stabilizers there exists a data qubit that is stabilized by exactly one $X$ stabilizer and one $Z$ stabilizer, respectively, then this data qubit is referred to as a corner data qubit, the corresponding $X$/$Z$ stabilizers are called corner stabilizers, and the ancilla qubit that participates only in the measurement of these stabilizers is called a corner ancilla qubit. A more detailed discussion on boundaries is provided in Appendix~\ref{app:Boundary}.
\subsection{Superstabilizer Construction for Defect Clusters}
\label{subsec:Clusters}
We now provide a comprehensive description of the proposed adaptive scheme for color code defects. The figure illustrates the step-by-step construction of the superstabilizer architecture using a defect grid of the color code with a code distance of 9 as an example. According to the propositions we have proved, all data qubit defects can be handled as shown in Fig.~\ref{fig:defectcluster_diag}(a).
\begin{figure}[!tbp]
	\centering
	\includegraphics[width=\linewidth]{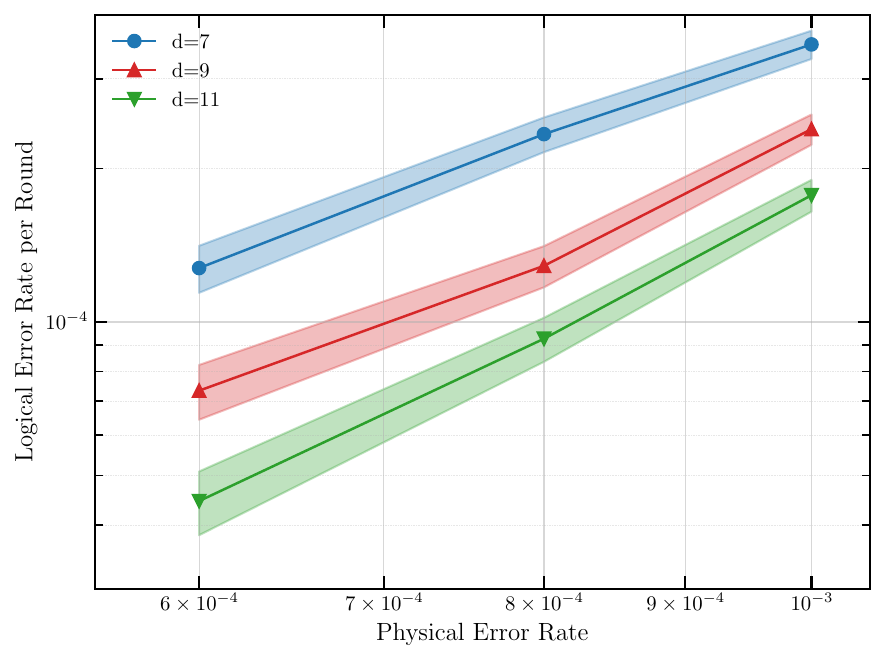}
	\caption{\textbf{Simulation results of superstabilizer handling of defect clusters: }coupler defects are not considered, and data qubits and ancilla qubits each have a 1\% probability of being defective.}
	\label{fig:defectcluster_sim}
\end{figure}
After performing the steps illustrated in Fig.~\ref{fig:defectcluster_diag}, we obtain a well-defined set of superstabilizers that effectively compensate for the defect cluster. Subsequently, it suffices to have all functioning ancilla qubits measure the gauge operators formed by their neighboring functioning data qubits in the quantum circuit, yielding superstabilizer syndromes for subsequent decoding. A detailed discussion on handling the boundaries of defect clusters is provided in Appendix~\ref{app:Boundary}.The pseudocode for superstabilizer handling of defect clusters is as follows:
\begin{algorithm}[H]
	\caption{Superstabilizer Generation for Color Code Defects}
	\label{alg:short}
	\begin{algorithmic}
		\State \textbf{Input:} stabilizer set $S$, data defects $D$, ancilla defects $A$, coupler defects $C$
		\State \textbf{Output:} updated $S$
		
		\State \textbf{Step 0:} For each $a \in A$, add its adjacent ancilla(s) to $A$  \Comment{per corrected description}
		
		\State \textbf{Step 1:} For each $d \in D$, call \textsc{HandleDataDefect}$(d)$  \Comment{remove one weight‑1 stabilizer on $d$, multiply others by it}
		
		\State \textbf{Step 2:} Process corner ancilla and coupler defects
		\State $R \gets \emptyset$  \Comment{defects already reduced}
		\While{changes occur}
		\State $T \gets (A \cup \text{ancillas of }C) \setminus R$
		\For{each $a \in T$}
		\State Let $X_a$, $Z_a$ be the unique X‑ and Z‑stabilizer involving $a$
		\If{$\exists$ data qubit $q$ in support$(X_a)\cap$support$(Z_a)$ with exactly one X‑ and one Z‑stabilizer}
		\State Remove the corresponding stabilizer(s) from $S$
		\State $D \gets D \cup \{q\}$, $R \gets R \cup \{a,\text{related defects}\}$
		\State \textbf{break} and restart the while loop
		\EndIf
		\EndFor
		\EndWhile
		
		\State \textbf{Step 3:} Reduce remaining ancilla and coupler defects to data qubits
		\For{each $c \in C \setminus R$}
		\If{both ends of $c$ are ancillas} $A \gets A \cup \{\text{its ancillas}\}$
		\Else $D \gets D \cup \{\text{data qubits of }c\}$\EndIf
		\EndFor
		\For{each $a \in A \setminus R$} $D \gets D \cup \{\text{neighbors of }a\}$ \EndFor
		
		\State \textbf{Step 4:} For each $d \in D$, call \textsc{HandleDataDefect}$(d)$
		\State \Return $S$
	\end{algorithmic}
\end{algorithm}

Beyond determining the superstabilizers, finding a logical operator that commutes with all gauge operators is essential for logical measurement and lattice surgery in color codes. When a stabilizer is discarded, if its support set overlaps with the logical operator, the logical operator must be updated by multiplying it with the discarded stabilizer to ensure commutativity with all gauge operators.

It is worth noting that the iSWAP-Mediated scheme proposed for isolated ancilla qubit defects can be seamlessly integrated into the above adapter workflow: before the third step, all ancilla and coupler defects that meet the usage conditions (See Appendix~\ref{app:iSWAP-Mediated} for details.) are screened out and processed using the iSWAP-Mediated scheme. In contrast, the neighbor-assisted scheme, due to its specific characteristics, currently lacks a general method for integration.

Fig.~\ref{fig:defectcluster_sim} presents our numerical simulation results. For color codes with distances 7, 9, and 11, we randomly generated 1000 lattices with a defect rate of 1\% and performed 1000 simulations on each lattice using the superstabilizer scheme. The simulation results show that, under the same defect rate, a larger code distance yields a lower logical error rate, and as the physical error rate increases, the logical error rate increases correspondingly for all code distances. This trend demonstrates that the superstabilizer scheme effectively mitigates the impact of defects and that increasing the code distance further enhances fault tolerance, thereby validating the scalability of the scheme.
\section{Logical Operator}
\label{sec:Operator}
Universal fault-tolerant quantum computation requires Clifford gates along with at least one non-Clifford gate \cite{campbell2017roads}. One of the key advantages of color codes is their ability to implement Clifford gates transversally \cite{moussa2016transversal,kubica2015universal}. In contrast, non-Clifford gates are typically realized in discussions of universal fault-tolerant quantum computation based on color codes through techniques such as lattice surgery combined with magic state preparation (e.g., magic state distillation \cite{lee2025low,landahl2014quantum} or magic state cultivation \cite{gidney2024magic}). In this section, we describe transversal Clifford gates on defective color codes as well as lattice surgery operations.
\subsection{Transversal Clifford Gate}
\label{subsec:Clifford}
In this section, we discuss the implementation of transversal Clifford gates on defective color codes.  Fig.~\ref{fig:Clifford}(a)(b)(c) illustrates the transversal implementation of the H, S and CNOT gates on defect-free color codes \cite{kubica2015universal,landahl2011fault,thomsen2024low}.

\begin{figure*}[htbp]
	\centering
	\includegraphics[width=0.8\linewidth]{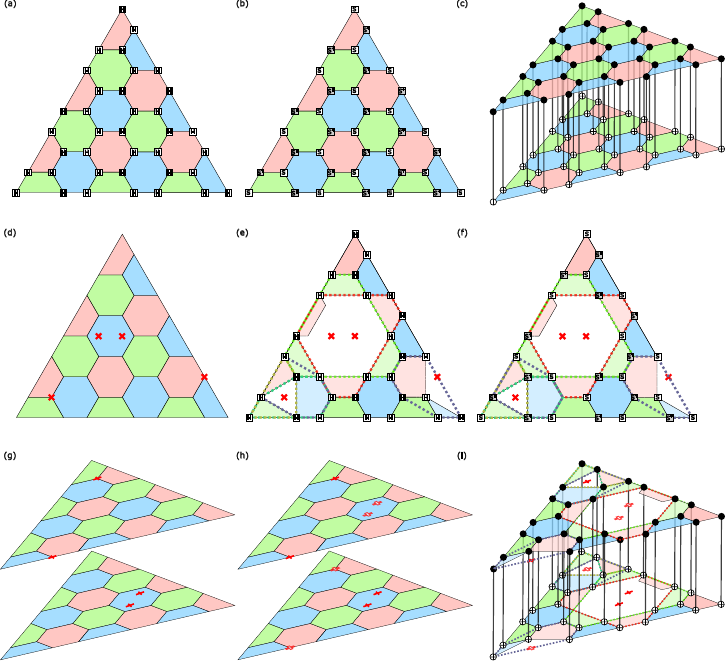}
	\caption{\textbf{Transversal Clifford gates on defect-free color codes:} (a) Transversal H gate; (b) Transversal S gate; (c) Transversal CNOT gate.	\textbf{Single-qubit Clifford gates on defective color codes:} (d) Schematic of defects; (e) Transversal H gate; (f) Transversal S gate.	\textbf{Transversal CNOT gate on defective color codes:} (g) Schematic of defects; (h) Because the defect distributions differ between the two code blocks, the stabilizers are not guaranteed to remain invariant under the transversal CNOT; therefore, additional defects are artificially introduced to make the defect distributions identical; (i) Transversal CNOT gate.}
	\label{fig:Clifford}
\end{figure*}

For the transversal H gate, given that the two-dimensional color code is a self-dual CSS stabilizer code, its logical Pauli operators are \(\bar{X}=X(Q)\) and \(\bar{Z}=Z(Q)\), where \(Q\) denotes the set of all data qubits. The self-dual property ensures that applying the Hadamard gate transversally on all data qubits, denoted \(H(Q)\), leaves the stabilizers unchanged. Furthermore, it guarantees the conjugation relations \(\bar{X} \rightarrow H(Q) X(Q) H(Q)^\dagger = \bar{Z}\) and \(\bar{Z} \rightarrow H(Q) Z(Q) H(Q)^\dagger = \bar{X}\). Therefore, \(H(Q)\) directly implements the logical H gate transversally.

For the transversal S gate, the data qubits of the color code form a bipartite graph, and each stabilizer has an even weight. As shown in Fig.~\ref{fig:Clifford}(b), one can apply S gates to a subset of data qubits and \(S^\dagger\) gates to the complementary set. Since the S gate represents a rotation around the Z-axis, all Z-type stabilizers remain invariant. For X-type stabilizers, utilizing the relations \(S^\dagger X S = iXZ\) and \(S X S^\dagger = -iXZ\), and noting that each stabilizer contains an equal number of S and \(S^\dagger\) gates, the coefficients cancel out. Consequently, the X stabilizer transforms into an XZ-type stabilizer, which, when multiplied by the corresponding Z stabilizer, returns to its original form. Moreover, because the number of S gates applied to data qubits exceeds the number of \(S^\dagger\) gates by one, the logical operator \(\bar{X}\) transforms as \(\bar{X} \rightarrow i\bar{X}\bar{Z} = \bar{Y}\), thereby realizing the transversal S gate.

In defective color codes, the implementation of these two Clifford gates remains straightforward. Whether employing the general superstabilizer approach, the Neighbor-Assisted scheme, or the iSWAP-Mediated scheme (which optimize ancilla qubits for defects), the stabilizers preserve the self-dual property. After introducing defects, the logical Pauli operators are obtained by multiplying the original logical operators with the defective stabilizers. Due to the self-duality of the stabilizers, the logical X and Z operators after defect introduction maintain identical supports. Hence, applying H gates transversally to all non-defective data qubits suffices to implement the logical H gate. Similarly, the self-dual nature of stabilizers also implies that the intersection of the supports of any two stabilizers has even cardinality (otherwise, they would not commute). Therefore, the transversal S gate implementation used for defect-free codes can be adopted, with the instruction that gates are not applied to defective qubits. The superstabilizers, formed by multiplying stabilizers, ensure that the difference between the count of S and \(S^\dagger\) gates modulo 4 remains zero. Likewise, after incorporating the defective stabilizers, the logical operator still satisfies the transformation \(\bar{X} \rightarrow \bar{Y}\).

For chips capable of implementing transversal CNOT gates (which typically implies long-range coupling capabilities), defect issues are often not a primary concern. However, to provide a comprehensive discussion of Clifford gate implementation on defective chips, we address this case here. A logical CNOT gate can be implemented transversally by applying physical CNOT gates to each pair of corresponding qubits across two identical copies of the color code. In the defect-free scenario, \(\overline{CNOT}\) satisfies: \(\overline{CNOT}(S_1^X \otimes I) = S_1^X \otimes S_2^X\) and \(\overline{CNOT}(I \otimes S_2^Z) = S_1^Z \otimes S_2^Z\). Multiplying these results by \(S_2^X\) and \(S_1^Z\), respectively, leaves the stabilizers unchanged, and the logical Pauli operators also follow the transformation rules of a logical CNOT. However, in defective color codes, the defect distributions in the two code blocks are often different, leading to inconsistent constructions of their respective superstabilizers. Consequently, applying the transversal CNOT gate no longer guarantees that the stabilizers remain unchanged. To implement a transversal CNOT under these conditions, one must simultaneously consider the defect distributions of both code blocks to align their superstabilizer constructions. This undoubtedly results in higher logical error rates and increased resource overhead.
\subsection{Lattice Surgery}
\label{subsec:Surgery}
\begin{figure}[htbp]
	\centering
	\includegraphics[width=\linewidth]{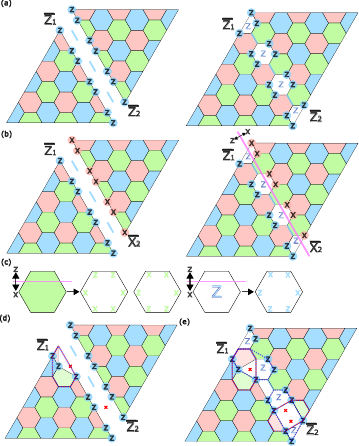}
	\caption{\textbf{ Schematic diagram of lattice surgery in color codes.}(a) Measurement of the logical operator $\bar{Z}_1\bar{Z}_2$ in a defect-free color code. Left panel: The intermediate blue auxiliary region is initialized into Bell states. The data qubits marked with Z on the boundaries correspond to the logical operators $\bar{Z}_1$ and $\bar{Z}_2$, respectively. Right panel: After merging, the central blue region contains only Z-type stabilizers.
		(b)Measurement of the logical operator $\bar{Z}_1\bar{X}_2$ in a defect-free color code. A domain wall is introduced, and anyonic excitations crossing it are converted between X-type and Z-type via a twist, enabling the measurement of non-commuting logical operators.
		(c) Schematic of a twist structure.
		(d) Initialization step for lattice surgery with defective color codes. A data qubit defect is present in both code patch 1 and the intermediate auxiliary region. Blue and red icons represent Z-type and X-type hyper-stabilizers, respectively.
		(e) Measurement of the logical operator $\bar{Z}_1\bar{Z}_2$ in a defective color code. The product of all blue stabilizers (including hyper-stabilizers that incorporate blue gauge operators) connecting the two logical operators yields the measurement value of $\bar{Z}_1\bar{Z}_2$.}
	\label{fig:lattice_surgery}
\end{figure}

Lattice Surgery is a key methodology in fault-tolerant quantum computing for implementing quantum logic gate operations, particularly multi-qubit entangling gates \cite{chatterjee2025lattice,tan2024sat}. Its core principle lies not in directly manipulating the protected logical qubits, but in achieving logical evolution by altering the structure of the underlying topological quantum error-correcting code, such as the surface code \cite{litinski2018lattice,litinski2019game} or the color code \cite{lee2025low,wang2025magic}. This technique requires only local interactions on a two-dimensional lattice and features relatively low resource overhead, making it one of the most promising architectures for scalable, universal quantum computation.

Implementing lattice surgery in color codes typically involves three steps: initialization, merging, and splitting. As shown in Fig.~\ref{fig:lattice_surgery}(a), when two color code patches are adjacent along a blue boundary, the intermediate blue auxiliary region must first be initialized. This involves preparing a Bell state on each blue edge within this region. Subsequently, specific stabilizer operators spanning the two original patches and the auxiliary region are measured (the merge operation), fusing the two logical blocks into one. For instance, this operation in Fig.~\ref{fig:lattice_surgery}(a) allows for the indirect measurement of the logical operator $\bar{Z_1}\bar{Z_2}$. Similarly, using the twist structure shown in Fig.~\ref{fig:lattice_surgery}(c), a measurement of the logical operator $\bar{Z_1}\bar{X_2}$ can be implemented as in Fig.~\ref{fig:lattice_surgery}(b). Finally, performing Bell-basis measurements on the qubits in the auxiliary region and applying appropriate corrections (the split operation) separates the merged patch back into individual code blocks.

For the surface code, the merged structure can be viewed as a larger surface code (potentially containing defects); therefore, most existing defect-handling schemes can be directly applied during lattice surgery operations for the surface code \cite{leroux2025snakes}. For defective color codes, although the post-merge structure is not identical to a standard color code, it still falls within the stabilizer code formalism. Consequently, our proposed superstabilizer adaptive scheme remains applicable, with its handling procedure illustrated in Figs.~\ref{fig:lattice_surgery}(d) and (e). Notably, our treatment of data-qubit defects in the auxiliary region is as follows: we choose to disable both the defective qubit and its Bell-pair partner. This decision is based on two main considerations: first, initializing a single data qubit in the intermediate region to meet the requirements of lattice surgery is challenging; second, based on prior experience with surface code research \cite{lin2024codesign}, single-qubit gauge operators are generally detrimental to the logical error rate. For lattice surgery involving a domain wall, as shown in Fig.~\ref{fig:lattice_surgery}(b), the post-defect handling is similar and is not elaborated here.

Beyond the superstabilizer scheme, our other proposed optimization strategies for handling ancilla qubit defects can also be applied in the lattice surgery context. In the junction region, the iSWAP-Mediated scheme handles intermediate defective regions in a similar manner: although additional measurements of two-body stabilizers in the intermediate region are required, the time steps saved by measuring single-Pauli-type stabilizer regions can compensate for the measurements of the two-body stabilizers. The Neighbor-Assisted scheme follows a similar approach. Additionally, it is worth noting that for ancilla qubit defects located in a single-Pauli-type stabilizer region—where simultaneous consideration of both \(X\)-type and \(Z\)-type stabilizers is no longer needed—the Neighbor-Assisted scheme can fully maintain the code distance for that Pauli type while reducing the code distance for the other Pauli basis by 2.

This section does not provide explicit circuits for performing lattice surgery but merely outlines feasible schemes. More detailed circuit optimization strategies specific to lattice surgery scenarios remain to be investigated in future work.
\section{CONCLUSION AND OUTLOOK}
\label{sec:CONCLUSION}
This paper establishes a universal framework for constructing superstabilizers applicable to arbitrary stabilizer codes. The proposed recursive generation algorithm fully covers defective regions while ensuring strict commutation relations with gauge operators, thereby laying a theoretical foundation for a general defect-adaptive scheme that can be extended to other topological codes. Building on this framework, we further propose a fully automated adaptive adapter specifically for the color code on square lattices following the superstabilizer scheme. This adapter uniformly reduces ancilla qubit and coupler defects to data-qubit defects, enabling adaptive handling under defective scenarios. For isolated ancilla qubit defects, we present two optimization schemes: the Neighbor-Assisted scheme improves qubit utilization by reusing surrounding ancilla qubits and effectively reduces the logical error rate in specific defect cluster scenarios, while the iSWAP-Mediated scheme achieves zero code distance loss using CXSWAP gates and can be seamlessly integrated into the processing flow of defect clusters. Finally, this work provides feasible schemes for transversal Clifford gates and lattice surgery on defective color codes, offering a viable pathway for the low-overhead automated deployment of color codes on defective hardware and delivering scalable technical support for building color-code-based universal fault-tolerant quantum computing systems.

Although this work has achieved preliminary results in handling color code defects, several key issues remain worthy of further exploration in subsequent research. For example, the comprehensive performance of the Neighbor-Assisted and iSWAP-Mediated schemes in complex defect cluster scenarios requires more systematic analysis and discussion to clarify their applicable scopes and optimal strategies. In addition, developing an efficient and scalable retrieval algorithm for locating the shortest logical error chains in defective color codes remains an important direction that warrants further investigation. Furthermore, the present work primarily focuses on validating the feasibility of the proposed schemes. Future research can build upon this foundation to explore performance optimization, such as attempting more powerful decoding algorithms or integrating strategies like stabilizer weight reduction and the shell approach \cite{strikis2023quantum} with the current schemes to enhance overall error correction performance.

Based on the propositions we have proved regarding the universality of the superstabilizer scheme, the superstabilizer framework proposed in this work can be extended to other error-correcting codes realizable on superconducting hardware platforms, such as the heavy-hexagon code \cite{chamberland2020topological,kim2026magic} or qLDPC-like BB codes \cite{wang2026demonstration}, to verify its cross-platform generality. At the same time, small-scale experiments on defective color codes can be carried out on actual superconducting chips to further validate and refine the proposed schemes through hardware testing, thereby advancing their practical implementation.

\section{Acknowledgments}
\label{sec:Acknowledge}

This work was supported by the National Key Research and Development Program of China (Grant No. 2024YFB4504101) and Anhui Province Innovation Plan for Science and Technology (Grant No. 202423r06050002). 

\appendix

\section{Neighbor-Assisted Scheme}
\label{app:Neighbor-Assisted}
The Neighbor-Assisted scheme is analogous to the SNL scheme in surface codes \cite{leroux2025snakes}, as both address defects by reusing neighboring ancilla qubits. However, due to the higher density of stabilizers in color codes, it is not possible to devise a scheme, akin to SNL, that preserves the code distance in one Pauli basis while reducing it by only 2 in the other basis when ancilla qubit defects are present. Specifically, if one attempts to preserve the code distance in the \( Z \) basis, the Neighbor-Assisted approach must be applied to the \( X \)-type stabilizers (as shown in the left panel of Fig.~\ref{fig:Neighbor_Assisted}(a)), thereby retaining all \( X \)-type stabilizers. This, however, necessarily results in the loss of the defective \( Z \)-type stabilizer; otherwise, the non-commutation between the \( X \)-type stabilizers and the \( Z \)-type gauge operators would force their combination into a superstabilizer. In this case, the missing \( Z \)-type stabilizer can only be compensated via the superstabilizer scheme (as illustrated in the right panel of Fig.~\ref{fig:Neighbor_Assisted}(a)) through multiplication. However, the resulting gauge operators effectively act as shorter error chains for the \( X \)-type stabilizers, causing the code distance in the \( X \) basis to decrease. Consequently, it is not possible to sacrifice the code distance in only one basis as in the SNL scheme. Instead, the Neighbor-Assisted scheme must be applied to both bases uniformly, as shown in the left panel of Fig.~\ref{fig:Neighbor_Assisted}(a). In this approach, blue ancilla qubits are used to measure the defective stabilizer, while the three surrounding red stabilizers multiply to form a superstabilizer. After this multiplication, the syndromes of these three red stabilizers become equivalent, meaning that within the bosonic theory, a red boson can move freely among them, forming shortcuts for error propagation.
\begin{figure}[htbp]
	\centering
	\includegraphics[width=0.8\linewidth]{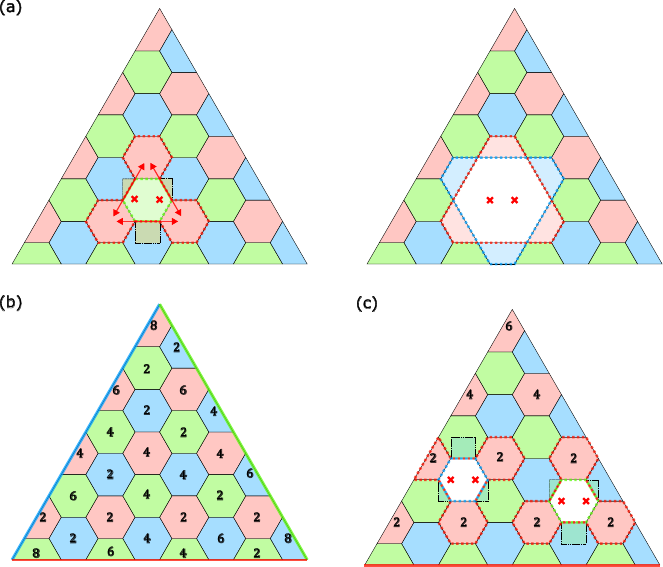}
	\caption{(a) Left: Schematic of the Neighbor-Assisted scheme for handling ancilla qubit defects; Right: The superstabilizer scheme. In the left panel, blue ancilla qubits are used to measure the defective stabilizer, and the three surrounding red stabilizers multiply to form a superstabilizer, allowing a red boson to move freely among them. (b) Labeling of stabilizers in a defect-free color code, where the number indicates the minimum weight of the error chain required to transport the corresponding boson to its same-color boundary. (c) An example where two ancilla qubit defects are both addressed using the Neighbor-Assisted scheme. The six resulting red boson shortcuts are all located between red stabilizers labeled 2 and 4, resulting in a maximum reduction of 2 in the minimum error chain weight required for red boson transport to the boundary.}
	\label{fig:Neighbor_Assisted}
\end{figure}
According to the bosonic theory of color codes \cite{kesselring2024anyon}, any logical operator can be viewed as transporting red, green, and blue bosons from the bulk to their respective color boundaries. In Fig.~\ref{fig:Neighbor_Assisted}(a), the non-commutation induced by the gauge operators measuring the defective stabilizer forces the three neighboring red stabilizers to merge into a superstabilizer, thereby establishing three shortcuts for red bosons among these stabilizers. To quantify the impact of such shortcuts on the code distance, each stabilizer in a defect-free color code can be assigned a numerical value representing the minimum weight of the error chain required to transport a boson from that stabilizer to its corresponding color boundary (as shown in Fig.~\ref{fig:Neighbor_Assisted}(b)). Based on this labeling, Fig.~\ref{fig:Neighbor_Assisted}(c) presents a representative example where two ancilla qubit defects are both addressed using the Neighbor-Assisted scheme. The six resulting red boson shortcuts are all located between red stabilizers labeled 2 and 4. Consequently, the minimum weight of the error chain required to transport a red boson to the red boundary is reduced by at most 2. In contrast, applying the superstabilizer scheme to the same configuration would lead to a code distance reduction of 4.

Although the Neighbor-Assisted scheme can significantly reduce the logical error rate in specific configurations (such as the one shown in Fig.~\ref{fig:Neighbor_Assisted}(c)), establishing a universal theory to rapidly determine the exact circuitry of this scheme and its precise impact on the code distance for more complex defect distributions remains challenging. Moreover, the temporal halving of the code distance implies that this scheme incurs higher resource overhead. Therefore, the advantages of the Neighbor-Assisted scheme are primarily manifested in certain specific circuit configurations.

\section{iSWAP-Mediated scheme}
\label{app:iSWAP-Mediated}

\begin{figure}[htbp]
	\centering
	\includegraphics[width=0.8\linewidth]{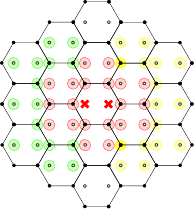}
	\caption{\textbf{Usage conditions of the iSWAP-mediated scheme:} A defect occurring on any bit marked in red in the figure indicates that the iSWAP-mediated scheme cannot be applied. Additionally, the scheme is inapplicable when defects occur simultaneously on bits marked in both yellow and green.}
	\label{fig:avai_iM}
\end{figure}

\begin{figure*}[htbp]
	\centering
	\includegraphics[width=0.85\linewidth]{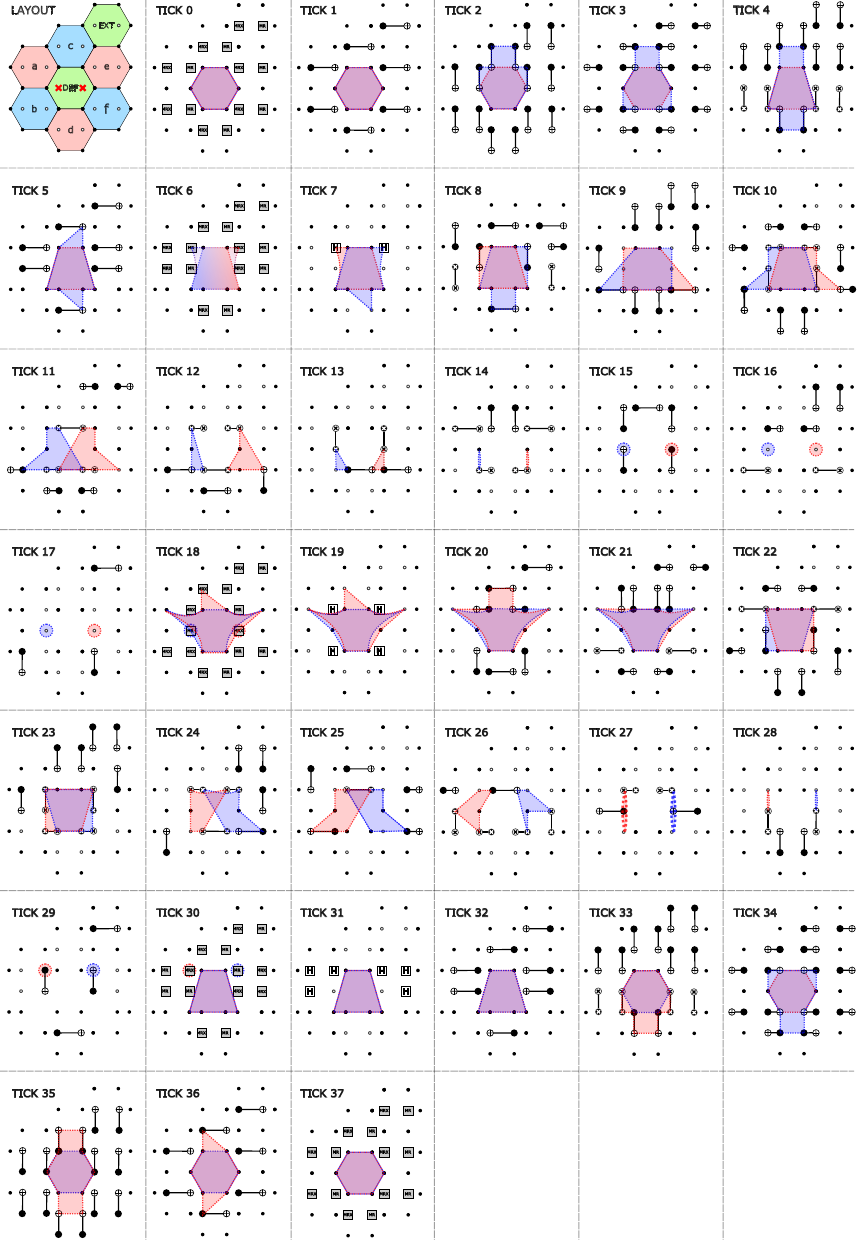}
	\caption{\textbf{ Complete cycle circuit of the iSWAP-mediated scheme.} The first subfigure shows the initial lattice and defect distribution, where six neighboring regions around the defect cluster are labeled as a through f, with DEF representing the defect cluster and EXT representing operations in all non-defect neighboring regions. During time steps 0–6, the circuit extracts syndromes for all non-defect X stabilizers. During time steps 7–18, it extracts syndromes for the defective X and Z stabilizers, as well as those of all non-neighboring X stabilizers, the X stabilizers in regions a–d, and the Z stabilizers in regions e and f. During time steps 19–30, it extracts syndromes for the defective X and Z stabilizers, along with those of all non-neighboring Z stabilizers, the Z stabilizers in regions a–d, and the X stabilizers in regions e and f. During time steps 31–37, the circuit extracts syndromes for all non-defect Z stabilizers and resets all qubits.In the figure, red/blue markings indicate X/Z stabilizers.}
	\label{fig:all_circuit_iM}
\end{figure*}
The iSWAP gate is equivalent to the CXSWAP gate, which can be understood as a SWAP gate following a CX gate. Applying a CXSWAP gate from an ancilla qubit to a data qubit is equivalent to extracting the X syndrome of the ancilla qubit followed by a swap operation. Leveraging this property, the iSWAP-mediated scheme enables ancilla qubits to continuously extract data qubit syndromes and perform swaps within an acceptable time overhead using CXSWAP gates. In this section, we present the complete circuit of the iSWAP-Mediated scheme.

\begin{figure*}[htbp]
	\centering
	\includegraphics[width=0.85\linewidth]{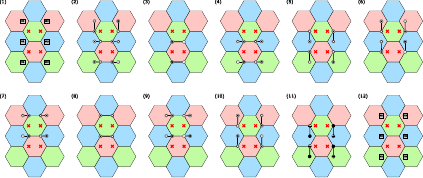}
	\caption{\textbf{Schematic illustration of the iSWAP-Mediated scheme for handling adjacent auxiliary-qubit defect clusters.} In the figure, syndromes of two pairs of X/Z defective stabilizers are extracted within 10 time steps, and all data qubits remain in place after one cycle without requiring resetting. Since syndromes of adjacent ancilla qubits are no longer extracted, the complete circuit is divided into two cycles: the first cycle measures all non-defective X and Z stabilizers; the second cycle measures the defective stabilizers, as well as all non-defective X and Z stabilizers that are not adjacent to the defective stabilizers.}
	\label{fig:doubleancilla_iM}
\end{figure*}

Compared with similar schemes in surface codes \cite{zhou2024halma}, the iSWAP-mediated scheme not only requires measuring a larger number of data qubits (with the stabilizer weight increasing from four to six) but also enjoys a more generous timing budget, as the defect-free circuit cycle extends from 4 to 10 two-qubit gates. This affords the iSWAP-mediated scheme greater flexibility and more possibilities. We observe that if we were to measure only the defective stabilizers within the 10 two-qubit gates, as done in \cite{zhou2024halma}, there remains significant redundancy. Therefore, we opt to simultaneously extract the X/Z syndromes of the neighboring regions while extracting the defective stabilizer syndromes, as illustrated in Fig.~\ref{fig:all_circuit_iM}. Within a full cycle, we employ 30 two-qubit gates and perform two rounds of syndrome extraction for all stabilizers. As a result, the required time overhead is reduced to 1.5 times that of the defect-free case, compared with twice that (if only defective stabilizers were measured, an additional round of neighboring stabilizer syndromes would be needed, requiring extraction of four rounds of syndromes for all non-neighboring defective stabilizers and two rounds for defective stabilizers and their neighboring ones within 40 two-qubit gates).

Regarding the handling of defect clusters, the applicable conditions of the iSWAP-Mediated scheme are illustrated in Fig.~\ref{fig:avai_iM}. When there exist isolated auxiliary-qubit defect pairs satisfying these conditions, the iSWAP-Mediated scheme can be integrated into our adaptive color code adapter. It is noted that in the figure, the qubits marked in yellow and green are divided into two groups and arranged symmetrically. This is because, when extracting syndromes for defective stabilizers, the Pauli types of neighboring defective stabilizers are inconsistent—due to constraints on the number of two-qubit gates and the required order for measuring adjacent stabilizers of different types, it is not possible to measure neighboring defective stabilizers of the same Pauli type within a single round. Consequently, the Pauli types of the two neighboring defective stabilizers on one side of the defect region differ from those of all other non-defective stabilizers. If additional defects exist near the boundary between stabilizers of different Pauli types, causing gauge operators to appear in the vicinity of the defect, these gauge operators may not commute with each other, leading to the failure of the fault-tolerant scheme. It is worth noting that iSWAP-gate-based schemes are flexible, and we cannot guarantee that the circuits we provide are optimal. The discussion of defect-cluster applicability in this paper is based solely on our proposed scheme.

Furthermore, one way to enhance the flexibility of the scheme to better accommodate defect clusters is to discard the measurement of neighboring stabilizers when measuring defective stabilizers, which is consistent with the halma approach used in surface codes. Here we present an example (as shown in Fig.~\ref{fig:doubleancilla_iM}): at the cost of omitting measurements of neighboring stabilizers, we successfully handle two adjacent pairs of auxiliary-qubit defects using the iSWAP-Mediated scheme without any loss of code distance. However, discarding measurements of neighboring stabilizers necessitates an additional round of syndrome extraction for those neighbors, thereby increasing the time overhead from a factor of 1.5 to a factor of 2.

In practice, since the time overhead of measurement and reset operations on actual quantum chips is much higher than that of two-qubit gates, the actual time overhead added by our iSWAP-Mediated scheme is far below the theoretical upper bound of 1.5×. In practical applications, one may also consider increasing the number of time steps to improve flexibility for more complex defect configurations, which we will not elaborate further here.

\section{Boundary}
\label{app:Boundary}
Readers familiar with defect handling in surface codes may note that, unlike the approaches discussed for boundary issues in surface code defect scenarios \cite{wei2025low}, this work does not provide a separate discussion for boundaries in the context of color code defects. Instead, we uniformly adopt the superstabilizer scheme. This section details the handling methods for defects located on or near the boundaries of the color code.
\begin{figure}[htbp]
	\centering
	\includegraphics[width=0.8\linewidth]{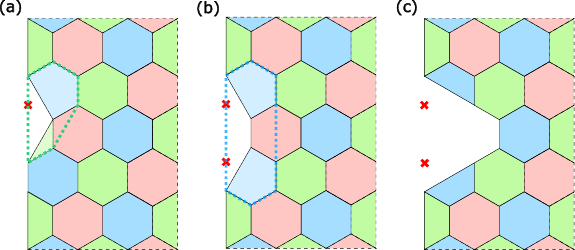}
	\caption{\textbf{Schematics of boundary defect handling:} (a) A single data qubit defect on the left boundary. The superstabilizer scheme multiplies two adjacent stabilizers to form a superstabilizer. (b) Two adjacent data qubit defects on the boundary. The three stabilizers associated with the defects are multiplied to form a superstabilizer. (c) The defect configurations in (a) and (b) share the same boundary deformation outcome. Due to the defects being located on the red boundary, a portion must be removed to ensure the new boundary is only contacted by blue-green stabilizers, resulting in a code distance reduction of 2.}
	\label{fig:Boundary_diagram}
\end{figure}
In conventional surface code schemes, defects on boundaries are typically addressed through "boundary deformation." This involves contracting the surface code patch according to specific rules until a new, defect-free patch with intact boundaries is formed. In the resulting patch, data qubits on the X/Z-type boundaries are stabilized by a combination of two X/Z-type and one Z/X-type stabilizers. Data qubits at the corners are stabilized individually by one X-type and one Z-type stabilizer. Extending this concept to the color code requires that data qubits on the red, green, or blue boundaries be stabilized by blue-green, red-green, or red-blue type stabilizers, respectively, while corner data qubits can only be stabilized by single-color stabilizers. However, due to the color periodicity inherent in the color code structure, applying such a boundary redefinition scheme to handle defects on color code boundaries often incurs a code distance reduction of 2, as illustrated in Fig.~\ref{fig:Boundary_diagram}(c).

If we instead continue to employ the superstabilizer approach, for a data qubit defect on a boundary (Fig.~\ref{fig:Boundary_diagram}(a)), we simply multiply the two X-type or Z-type stabilizers supported by that qubit, thereby forming two superstabilizers. For a data qubit defect at a corner (Fig.~\ref{fig:Boundary_diagram}(b)), we only need to remove the stabilizers associated with that corner. According to the proposition demonstrated in the main text, both of these methods result in a code distance reduction of at most 1.
\begin{figure}[htbp]
	\centering
	\includegraphics[width=0.8\linewidth]{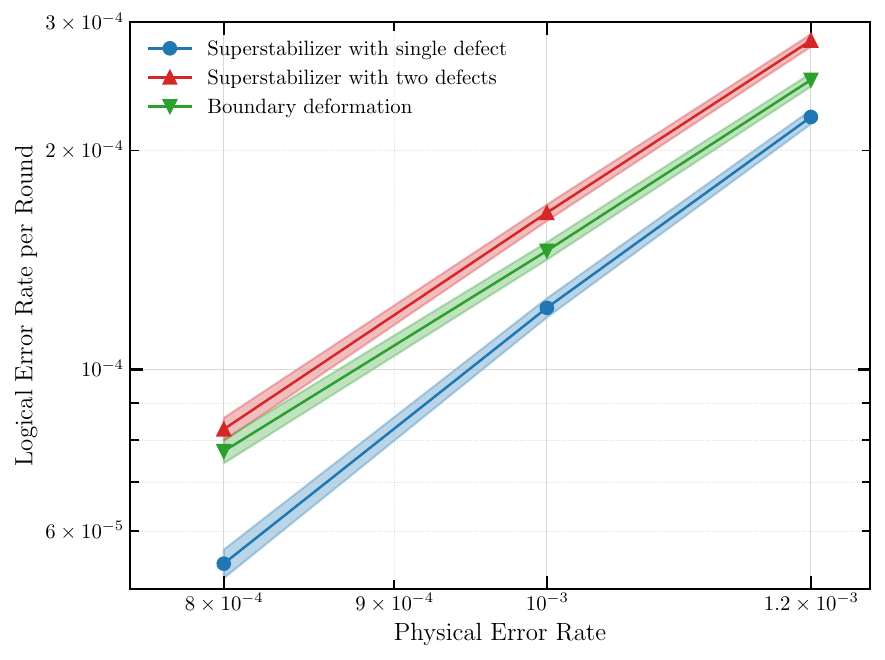}
	\caption{\textbf{ Simulations were performed on a color code with code distance \( d = 9 \).} The blue, red, and green curves correspond to the configurations shown in Fig.~\ref{fig:Boundary_diagram}(a), (b), and (c), respectively, with the exception of the pristine case where no defects are present on the boundary.}
	\label{fig:Boundary_sim}
\end{figure}
Fig.~\ref{fig:Boundary_sim} presents numerical simulation results for boundary defects. Consistent with our expectations, the superstabilizer scheme yields the lowest logical error rate when handling a single data qubit defect on the boundary. In scenarios requiring a code distance reduction of 2, the boundary redefinition scheme exhibits a lower logical error rate compared to the superstabilizer scheme applied to two adjacent boundary data qubit defects. This is attributed to the threshold degradation caused by the introduction of superstabilizers in the latter case. Fig.~\ref{fig:Corner} shows simulation results for corner defects, where the superstabilizer scheme consistently demonstrates a lower logical error rate than the boundary deformation method across all comparisons.
\begin{figure}[htbp]
	\centering
	\includegraphics[width=0.8\linewidth]{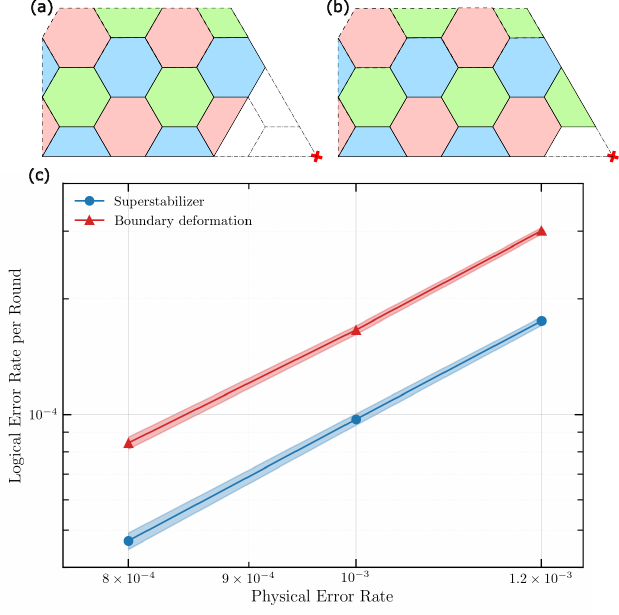}
	\caption{\textbf{Corner defect configurations.}(a) The boundary deformation scheme applied to a corner defect, resulting in a code distance reduction of 2. (b) The superstabilizer scheme applied to a corner defect, resulting in a code distance reduction of 1. (c) Numerical simulation results on a color code with code distance \( d = 9 \). The blue and red curves correspond to the configurations in (a) and (b), respectively, except for the pristine case without any defects on the boundary.}
	\label{fig:Corner}
\end{figure}
Considering that defect distributions are generally sparse in typical scenarios, this study ultimately adopts the superstabilizer scheme for handling defects on the boundaries of the color code.
\section{Details of numerical simulations}
\label{app:simulation}
All numerical simulations in this paper are conducted under the circuit-level noise model to obtain syndrome data. Specifically, the depolarizing Pauli noise channels are defined as follows:
\begin{equation}
	\begin{aligned}
		\mathcal{E}_{1}(\rho_1)&=(1-p)\rho_1+(p/3)\sum_{P\in\{X,Y,Z\}}P\rho_1P,\\
		\mathcal{E}_{2}(\rho_2)&=(1-p)\rho_2+(p/15)
		\times \\ &\sum_{\substack{P_1,P_2\in\{I,X,Y,Z\},\\P_1\otimes P_2\neq I \otimes I}}P_1\otimes P_2\rho_2P_1\otimes P_2,
	\end{aligned}
\end{equation}
where $\rho_1$ and $\rho_2$ are the single-qubit and two-qubit density matrices, respectively. In the simulated circuits, we apply $\mathcal{E}_{1}$ after single-qubit gates and $\mathcal{E}_{2}$ after two-qubit gates. Additionally, the measurement outcome and state initialization flip with probability $p$. We simulate the circuits using the Stim library~\cite{gidney2021stim}.

Subsequently, we employ the BP-OSD decoder for decoding~\cite{roffe2020decoding}. It should be noted that this decoding scheme is not optimal, which contributes to the relatively high logical error rates observed in our test results. In the future, we may consider exploring better decoding schemes for the defective color code, such as MWPM \cite{higgott2025sparse}(we speculate that this could be achieved by reducing the hyperedge stabilizer symptoms to defect-free symptoms and setting the weights of the edges corresponding to defective data qubits on the matching graph to zero), or by attempting recently promising machine learning decoding methods such as Alpha Qubit~\cite{bausch2024learning}.

For the tests with code distances 9 and 11 in Fig.~\ref{fig:sim_iM} of the main text, at least 100,000 trials are performed. For all other experiments, more than 1,000,000 samples are collected. The standard deviations for all data points are indicated by filled bands in the data figures.
\section{Stability Experiment}
\label{app:Stability}
\begin{figure}[htbp]
	\centering
	\includegraphics[width=0.85\linewidth]{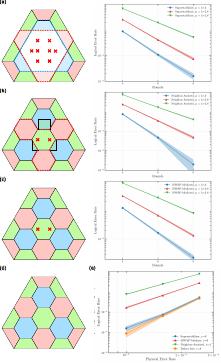}
	\caption{\textbf{Stability Experiments.} (a) superstabilizer scheme: The product of all complete red and green stabilizers with the red and green superstabilizers is the identity operator; "rounds" in the right subfigure denotes one full round of syndrome extraction for all stabilizers. (b) Neighbor-Assisted scheme: The defective stabilizer is measured using adjacent blue ancilla qubits; the product of the red stabilizers forms a superstabilizer, whose product with the green stabilizers yields the identity operator. (c) iSWAP-Mediated scheme: The defective stabilizer is measured via iSWAP gates using neighboring ancilla qubits. (d) Defect-free case: The product of all red and green stabilizers is the identity operator. (e) Comparison of stability experiment results for different schemes over the same physical duration. The defect-free case completes 8 measurement rounds; within the same duration, the Neighbor-Assisted scheme completes 4 rounds, and the iSWAP-Mediated scheme completes 6 rounds.}
	\label{fig:stability}
\end{figure}
In quantum computing, stability experiments refer to the design and execution of tests aimed at evaluating and verifying the long-term reliability and error suppression capabilities of qubits and the error-correcting codes they form, under noisy and decohering environments \cite{gidney2022stability,regent2025awesome}. Such experiments serve as a crucial benchmark for assessing the practicality of error-correcting codes, requiring the system to maintain the integrity of logical information throughout prolonged operation \cite{dasgupta2021stability,dasgupta2020characterizing}.

To this end, we designed the stability experiment framework shown in the left column of Fig.~\ref{fig:stability}. The defect-free case is illustrated in Fig.~\ref{fig:stability}(d), which includes 32 stabilizers to be measured and 30 data qubits. In the figure, the product of all same-type (X or Z) stabilizers marked in red and green equals the identity operator. To address defects in the ancilla qubit at the center of the code block, we adopted three different schemes: superstabilizer, iSWAP-Mediated, and Neighbor-Assisted, corresponding to the left subfigures of Fig.~\ref{fig:stability}(a), (b), and (c), respectively. The numerical simulation results for each scheme are presented in the right subfigures of Fig.~\ref{fig:stability}(a), (b), and (c). It can be observed that the error rates of all error-correcting schemes decrease exponentially with the number of measurement rounds.

From the comparison results in Fig.~\ref{fig:stability}(e), the superstabilizer scheme exhibits the best stability, nearly matching the performance of the defect-free case. This outcome differs significantly from typical results in surface codes, primarily because our superstabilizer scheme preserves the original code distance along the time direction without introducing additional time overhead. The iSWAP-Mediated scheme, which employs a large number of two-qubit gates during syndrome extraction, performs slightly worse. In contrast, the Neighbor-Assisted scheme, which involves higher-weight superstabilizers, shows the weakest performance.

A scheme that performs well in stability experiments suggests potential advantages when executing logical operations\cite{marton2025optimal,lin2024empirical}. However, the trade-off between this advantage and the spatial code distance remains a subject for further investigation.
\bibliographystyle{unsrtnat} 
\bibliography{references.bib}

\end{document}